\newcommand{\myparagraph}[1]{\smallskip \noindent \textbf{#1}. }
\definecolor{blueviolet}{rgb}{0.2, 0.2, 0.6}
\definecolor{webgreen}{rgb}{0,.5,0}
\definecolor{webbrown}{rgb}{.6,0,0}
\newcommand{\Gen}{\mathsf{Gen}}
\newcommand{\Rec}{\mathsf{Rec}}
\newcommand{\Eval}{\mathsf{Eval}}
\newcommand{\Adv}{\mathsf{Adv}}
\newcommand{\negl}{\mathsf{negl}}
\newtheorem{theorem}{Theorem}
\newtheorem{definition}{Definition}
\newtheorem{proposition}{Proposition}
\newtheorem{lemma}{Lemma}
\newtheorem*{lem*}{Lemma}
\newtheorem{remark}{Remark}
\newtheorem*{rem*}{Remark}
\newtheorem{corollary}{Corollary}
\newtheorem*{fact*}{Fact}
\newcommand{\footremember}[2]{%
    \footnote{#2}
    \newcounter{#1}
    \setcounter{#1}{\value{footnote}}%
}
\begin{document}

\title{Quantum function secret sharing}

\author{
Alex B. Grilo\footremember{cnrs}{Sorbonne Université, CNRS, LIP6, Paris, France} 
\and
Ramis Movassagh\footremember{google}{Google Quantum AI, Venice, CA, 90291, U.S.A.
}}

\date{}

\maketitle

\begin{abstract}
We propose a quantum function secret sharing scheme in which the communication is exclusively classical. In this primitive, a classical dealer distributes a secret quantum circuit $C$ by providing shares to  $p$ quantum parties. The parties on an input state $\ket{\psi}$ and a projection $\Pi$, compute values $y_i$ that they then classically communicate back to the dealer, who can then compute $\norm{\Pi C\ket{\psi}}^2$ using only classical resources. Moreover,  the shares do not leak much information about the secret circuit $C$.
Our protocol for quantum secret sharing uses the {\em Cayley path}, a tool that has been extensively used to support quantum primacy claims. More concretely, the shares of $C$ correspond to randomized version of $C$ which are delegated to the quantum parties, and the reconstruction can be done by extrapolation. Our scheme has two limitations, which we prove to be inherent to our techniques: First, our scheme is only secure against single adversaries, and we show that if two parties collude, then they can break its security. Second, the evaluation done by the parties requires exponential time in the number of gates.

\end{abstract}

\section{Introduction and summary of results}
Universal blind quantum computing~\cite{BroadbentFK09} was a breakthrough result in quantum cryptography allowing a client to delegate a quantum circuit of their choice to a quantum server {\em without revealing the circuit}. This is even more impressive when we realize that $1)$ their result is information-theoretically secure, and $2)$ it predates classical protocols for fully-homomorphic encryption~\cite{Gentry09}, which requires computational assumptions. However, there is a crucial point in the security of the protocol: the client needs to send random qubit states to the servers, so the client is not fully classical. This was later shown to be necessary to achieve universal blind quantum computing with information-theoretical security with a single server~\cite{AaronsonCGK19}.

While other results achieve blind quantum computing in the multi-server setting with fully classical clients~\cite{cgjv24},  the servers must share highly entangled states. It is thus an open question of whether it is possible to achieve schemes for universal blind quantum computing with multiple servers and classical clients, where the servers are ``stand alone'' parties that do not share entanglement. Such a question turns out to be a quantum generalization of what is called {\em function secret sharing} in classical cryptography.

In this work, we define quantum function secret sharing, propose a protocol for it, and discuss its advantages and limitations. We now summarize our contributions in more details.\\

\myparagraph{Defining quantum function secret sharing.} Classical secret sharing is a fundamental cryptographic primitive~\cite{shamir1979share,blakley1979safeguarding} where a \emph{dealer} is able to share a secret bitstring $s$ among  $p$ parties, where only allowed subsets of parties are able to recover the secret.
One of the most common settings for secret sharing is threshold secret sharing. In this setting given a fixed threshold $t$, each party receives a bitstring $s_i$ from the dealer, and given $(s_i)_{i \in T}$, the dealer retrieves the original $s$ if and only if (iff) $|T| \geq t$.
Such schemes have numerous applications such as in secure multiparty computation~\cite{wigderson1988completeness,chaum1988multiparty,cramer2000general}, controlling access to servers~\cite{naor1996access}, and generalized oblivious transfer~\cite{shankar2008alternative,tassa2011generalized} (see e.g.,~\cite{beimel2011secret} for further applications).

More recently, different extensions of secret sharing, such as homomorphic secret sharing~\cite{boyle2017foundations} or function secret sharing~\cite{boyle2015function,boyle2016function}, allow the parties not only to share secrets, but also use them to perform computations in a secure way. In this work, we  focus on function secret sharing.  In this setting, one wishes to share a function $f$ and allow the parties to compute $f(x)$ for a chosen input $x$ in a distributed and secret way.
In this case, the $i$-th party is given a share $f_i$ and a common input $x$, from which they can compute the value $y_i$. The recovering algorithm is then able to reconstruct $f(x)$ from $(y_1, \cdots, y_p)$. In order to have a meaningful definition, we require the complexity of the reconstruction algorithm to be much smaller than the complexity of computing $f$. In the classical setting, the recovering algorithm is typically additive~\cite{boyle2015function,boyle2016function}, i.e., $f(x) = \sum_i y_i$.
Function secret sharing  has also found various practical applications such as in multi-server private information retrieval~\cite{chor1997computationally,chor1998private,kushilevitz1997replication} and secure computation with pre-processing~\cite{boyle2019secure}.

The notion of secret sharing has also been extended to the quantum setting, which is similarly defined except with the secret being a quantum state rather than a bitstring~\cite{hillery1999quantum,cleve1999share,karlsson1999quantum}.
This idea has been studied extensively in the literature and has found applications in quantum cryptography such as quantum multi-party computation~\cite{gottesman2000theory,smith2000quantum,guo2003quantum,xiao2004efficient}.

The first contribution of our work is to define {\em quantum function secret sharing} (QFSS). In this primitive, the dealer is a classical computer and each of the parties is in possession of a quantum computer, and the communication between the dealer and the parties is completely classical.  

In our setting, we are interested in computing the probability of running a quantum circuit $C$ on a chosen input $\ket{\psi}$, and projecting it onto a chosen projector $\Pi$. The dealer computes the classical shares $C_1,...,C_p$ from $C$, and sends them to the parties. 

The parties use a common input $\ket{\psi}$, which can either be classically described by the dealer or more generally an unknown quantum state that comes from a quantum process, and they also share a target projection $\Pi$. Given $C_i$, copies of $\ket{\psi}$ and $\Pi$, the $i$-th party can compute a value $y_i$, which is sent back to the dealer (or more generally a third party). The properties that we require from QFSS are the following (see Fig.~\ref{fig:schematic}):
\begin{enumerate}
    \item \textbf{Correctness:} There is an efficient classical reconstruction algorithm that maps $y_1,...,y_p$ to a value $v$ such that $v$ is close to  $||\Pi C\ket{\psi}||^2$.
    \item \textbf{Security (informal): }  Given $C_i$, the $i$-th party has negligible information about $C$.
\end{enumerate}

\begin{figure}
\center
\includegraphics[scale=0.5]
{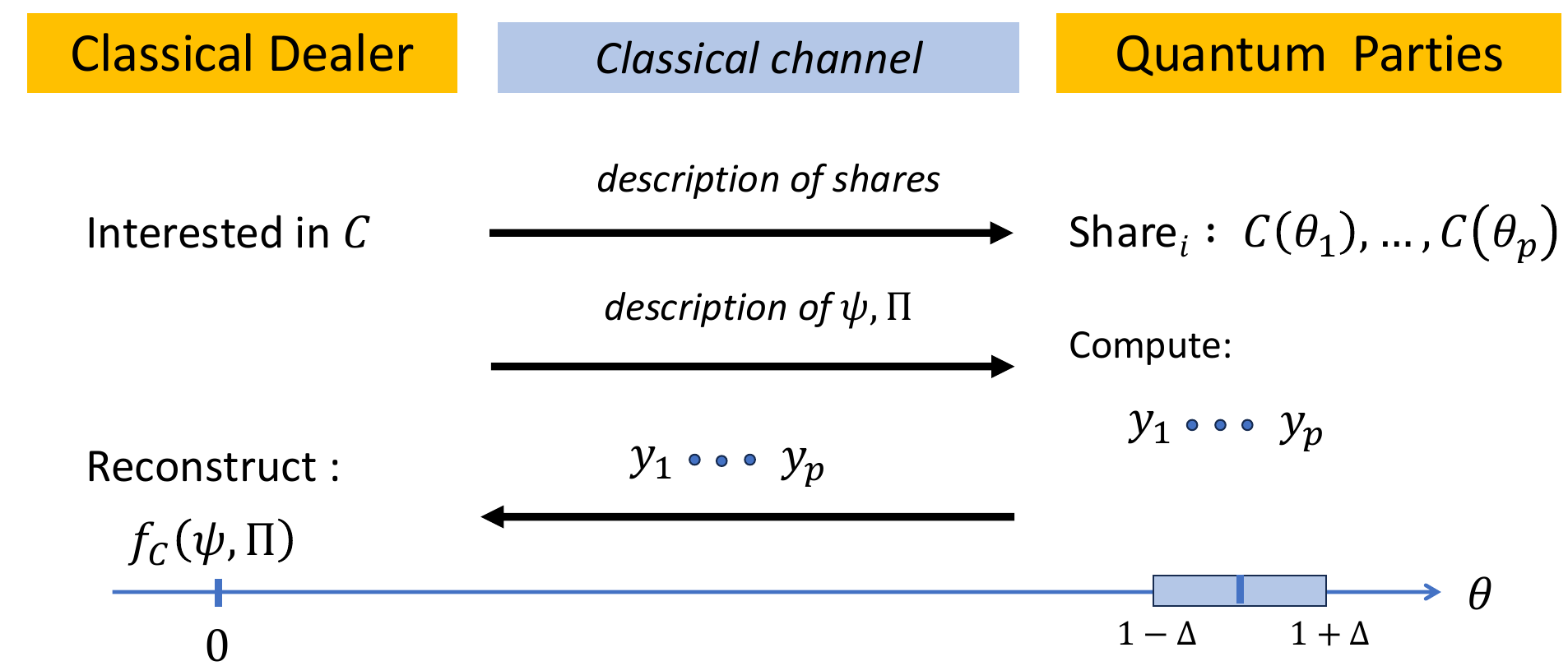}
\caption{A function $f_C$ is computed in a distributed way, where $f_C(\psi,\Pi):=\norm{\Pi C\ket{\psi}}^2$  is the output of a quantum circuit that the dealer (e.g., client) wants to compute. The dealer may wish to provide a classical description of $\psi$ but the protocol allows for a pre-agreed state $\psi$ whose $p$ copies are used by the quantum parties as inputs. The latter allows for inputs whose preparation may require deep quantum circuits. The quantum parties compute $y_i\approx\norm{\Pi C(\theta_i)\ket{\psi}}^2$ and classically communicate them back to the dealer. The reconstruction by the dealer is performed classically and efficiently through an extrapolation.}
\label{fig:schematic}
\end{figure}

\myparagraph{Implementing QFSS with Cayley paths.} 
With our definition in hand, we propose a protocol using the Cayley parametrization of quantum circuits~\cite{movassagh2023hardness,BoulandFLL21}. This technique gives a way of interpolating between quantum circuits.~\footnote{For a more detailed introduction to the topic, see \Cref{sec:cayley}} In particular one can use the Cayley parametrization to interpolate between a fixed circuit and a completely random circuit with the same architecture: Given a real parameter $0 \leq \theta \leq 1$, one defines a  family of circuits $C(\theta)$ such that
\begin{enumerate}
    \item $C(0)$ is the circuit of interest $C$ 
    \item $C(1)$ is a circuit with the same architecture as $C$ whose gates instantiate unitaries drawn independently from the Haar measure. Moreover, it can be shown that the statistical distance of $C(1-\delta)$ and circuits composed of Haar random gates is $\Theta(\Delta)$, for small $|\delta |\le \Delta$.
\end{enumerate}
 A nice  feature of the Cayley parametrization is that it tolerates some noise in the reconstruction of the output probability of interest. For example, given $(\theta_1,y_1),...,(\theta_\ell, y_\ell)$ such that $\left|y_i - \norm{\Pi C(\theta_i)\ket{\psi}}^2\right| \leq \epsilon$, we can compute a rational function $f$ such that \begin{align}\label{eq:reconstruction}
    \left|f(0) - \norm{\Pi C\ket{\psi}}^2\right| \leq \eta,
\end{align}
where $\eta$ depends on the degree of the extrapolation, $\epsilon$, and the extrapolation point $\theta=0$. \

This technique was used for worst-to-average case hardness reductions of computing the output probabilities in the context of quantum supremacy~\cite{movassagh2023hardness}. In this work, we use it to construct a quantum function secret sharing scheme. To the best of our knowledge, this is the first time that such a technique is used in the cryptographic setting.

More concretely, in our scheme we assume that the number of parties $p=\Theta(m)$ is at least the number of the gates $m$ in the circuit.
Each party receives one share $C_i := C(\theta_i)$, where $\Delta \geq |1-\theta_i|$. Each party then runs $C_i$ on the input $\ket{\psi}$ and measures according to $\{\Pi, I- \Pi\}$ to obtain an estimated value $y_i$  by running the circuit  $O(\epsilon^{-2})$ times such that with high probability $\left|y_i -  \norm{\Pi C_i\ket{\psi}}^2\right| \leq \epsilon$. 
Since $f(\theta)$ in \Cref{eq:reconstruction} is a low-degree algebraic function, we can compute $f(0)$ from sufficiently many distinct $(\theta_i,y_i)$, where by low-degree we mean that $\poly(n)$ number of points suffices. However, in order to find an interpolation that is good enough to extrapolate the value $\norm{\Pi C \ket{\psi}}^2$, each party has to provide an approximation of $\norm{\Pi C_i \ket{\psi}}^2$ with exponential precision, requiring exponential-time evaluation algorithms.

Regarding security, given the statistical distance of $C_i$ from the Haar measure, we have that any single party, even with unbounded computational power, cannot distinguish $C_i$ from a circuit composed of Haar random gates except with probability $\Theta(\Delta)$. Therefore, by picking $\Delta$ to be exponentially small in the security parameter, we can show the security of the QFSS scheme. We notice that in the usual quantum protocols, such as QKD and private verification quantum money, the information-theoretical security stems from the fact that the malicious parties actions collapse a quantum state. In our scheme, the information-theoretical security comes from the {\em randomization} properties of quantum circuits offered by the low-degree algebraic Cayley path.

\begin{remark}
In the above we considered a pure state input and output. However our protocol easily permits a generalization to $y_i:=\tr(\rho_i \Pi)$ where the input state is a general density matrix $\Psi$, and the output state of the ith party is $\rho_i=C(\theta_i)\,\Psi\, C(\theta_i)^\dagger$.
\end{remark}

\myparagraph{Limitations of the techniques}
Finally, we prove two limitations for our protocol that are actually inherent to our extrapolation techniques.

First, in \Cref{sec:no-go} we show that in order to achieve correctness and security, we must have an exponentially-precise in the number of gate approximation of $y_i$.

Secondly, we show in \Cref{sec:one party} that the protocol is {\em not} secure against colluding parties by proving that having $C_i$ and $C_j$, for $i \ne j$, suffices to  break the security of our scheme with probability $1$ and in an information-theoretic sense.

\subsection{Discussion and open problems}
In this work, we define the notion of QFSS, propose a protocol for it and prove the limitations of extending our protocol to a broader set of parameters.
This work, to the best of our knowledge, is the first to apply interpolation techniques in a quantum cryptographic setting.
While we focused on the QFSS primitive, we notice that our protocol implements blind delegation of quantum computation, and we hope that it serves as a stepping stone towards the fully classical verification of quantum computation, which was our original aim for this work.

\subsubsection{``Quantum-inspired" FSS}
We recall that in classical function secrete sharing (CFSS)~\cite{boyle2015function} a dealer distributes the secret shares $f_1,...,f_p$ of a desired circuit $f \in \mathcal{F}$,  which is a classical circuit (made up of ANDs and NOTs etc) to the parties. Each party computes their circuit functions by running the circuit on a chosen input $x$ to get $y_i$ and gives back $y_i$ to the dealer who can then calculate $f(x)$. While in standard CFSS schemes, the reconstruction algorithm is additive, one can consider more general algorithms, in which the parties are not strong enough to compute $f$ by themselves (which would trivialize the definition).

Based on our protocol, we can propose a new family of CFSS schemes for functions that can be implemented by {\em quantum circuits} applied on classically described states. The generation and recovery algorithms are exactly the same as ours and the only difference is in the evaluation algorithm: each party uses path-integral ideas to compute the output probabilities of the the circuit provided by the dealer. 

We notice that, currently, we only know general CFSS schemes for general family of functions that can be implemented by classical circuits under (very strong) computational assumptions~\cite{boyle2015function}. With our work, we can achieve an information-theoretically secure scheme for general quantum circuits, with the caveats being the exponential evaluation algorithm and the fact that security only holds against non-colluding adversaries. We compare these two approaches in the following table with the \textcolor{blue}{advantages} and \textcolor{red}{disadvantages}.

~\\

\begin{tabular}{|c|c|c|}
\hline 
 & General FSS from~\cite{boyle2015function} & Quantum(-inspired) FSS (this work)\tabularnewline
\hline 
\hline 
Type of security & \textcolor{red}{Computational} & \textcolor{blue}{Information-theoretic} \tabularnewline
\hline 
Circuits being shared & \textcolor{red}{Classical} & \textcolor{blue}{Quantum} \tabularnewline
\hline 
Evaluation time & \textcolor{blue}{Polynomial} & \textcolor{red}{Exponential}  \tabularnewline
\hline 
$k-$party collusion & \textcolor{blue}{Arbitrary} & \textcolor{red}{$k=1$} (\Cref{cor:impossibility-2qfss}) \tabularnewline
\hline 
\end{tabular}\\

What can one envisage as the utility for this quantum-inspired scheme? One can envisage a scenario in which each party is a classical simulator with a lot of computational power. For example,  super-clusters like Summit or companies such as NVIDIA, Google, IBM, etc. possess such computational capabilities. The dealer, being a small classical computer (e.g., in possession of a laptop), can delegate the quantum computation of interest to the parties. The parties then run classical algorithms such as Tensor Networks to provide exponentially precise answers to the dealer.  Then our work ensures a secure delegation of classical simulation of the quantum circuit of interest and ensure information-theoretic security.

\subsubsection{Private function evaluation}

A similar task to function secret sharing is that of private function evaluation (FPE) ~\cite{abadi1990secure,mohassel2014actively}, which was extended to the quantum setting in~\cite{cao2023quantum}. However, quantum PFE, one wishes to evaluate a quantum function $\mathcal{E}$ on a quantum input shared across multiple parties. Thus, in this setting, the input is shared. In contrast, in QFSS, each party has copies of the input state and the function itself is shared.

\subsubsection{Open problems}
We conclude this section with some open problems.
\medskip

\myparagraph{Alternative techniques for QFSS} In this work, we focused on using the Cayley path for QFSS. We leave it as an open question to find new techniques, extrapolation-based or not, that might improve on ours.
Related open problem is to show that our no-go result extends to all extrapolation-based techniques?

\myparagraph{Computational QFSS}
In this work, we focused on information-theoretical QFSS. A potential direction to overcome the limitations of our technique, especially the limitation on the threshold of our QFSS scheme, is to consider computational QFSS. Of course, one can trivially do it with strong cryptographic primitives such as quantum fully homomorphic encryption (FHE)~\cite{Mahadev18a,Brakerski18}. We leave as an open question if weaker cryptographic assumptions would enable QFSS (potentially for subclasses of quantum circuits).

\myparagraph{Verifiable QFSS} In standard FSS, the parties receive honest shares from the dealer. There is an extension to it, called verifiable FSS~\cite{boyle2016function}, where the parties are able verify if they receive valid shares. More concretely, they receive their shares from a potentially malicious user, and they want the guarantee that the shares are consistent with some function from the valid function space. This can be important for some applications of FSS. We leave it as an open question if verifiable QFSS is possible.

\myparagraph{Verifiable Universal Quantum Blind Computation} In our scheme, the parties are expected to run the honest evaluation so that the reconstruction can compute the desired probability. We ask if our scheme can be extended to check if the parties performed the correct operation, which would allow us to achieve Verifiable Blind Universal Quantum Computation with unentangled servers.

\myparagraph{Concrete application for quantum-inspired FSS} We discussed how the scheme could be of utility. It would be very interesting if a concrete application of practical interest could be found.

\section*{Acknowledgement}
We thank Laura Lewis for discussions. We also thank Kewen Wu, Ryuhei Mori, Jarrod McClean, and Stephen Jordan. ABG is supported by ANR JCJC TCS-NISQ ANR-22-CE47-0004, and by the French National Research Agency award number ANR-22-PNCQ-0002. This work was done in part while the authors were visiting the Simons Institute for the Theory of Computing, supported by DOE QSA grant $\#$FP00010905. ABG thanks Google Quantum AI for a collaborative grant.

\section{Preliminaries}

\subsection{Notation}

Fix the following notation. For any tuple $x = (x_1,\dots, x_m)$ and set $A = \{i_1,\dots, i_{|A|}\} \subseteq [m]$, then $x_A$ denotes the tuple $(x_{i_1},\dots, x_{i_{|A|}})$.
Similarly, for domains $X_1,\dots, X_m$, denote $X_A$ as $X_{i_1}\times \cdots \times X_{i_{|A|}}$.
In this way, if $x = (x_1,\dots, x_m) \in X_1\times \cdots \times X_m$, then $x_A \in X_A$.

\subsection{Cayley path}\label{sec:cayley}
The Cayley path is a parameterized path that interpolates between two unitaries~\cite{movassagh2023hardness}.
Namely, let $\mathbb{U}(N)$ be the set of $N \times N$ unitary matrices and $U_0, U_1 \in \mathbb{U}(N)$.
Then, the Cayley path is a parameterized path $U(\theta)$ such that $U(\theta) \in \mathbb{U}(N)$ for all $\theta \in [0,1]$ and $U(0) = U_0$ and $U(1) = U_1$ based on the Cayley function
\begin{equation}
f(x) = \frac{1 + ix}{1 - ix},
\end{equation}
where we define $f(-\infty) = -1$.
Here, we (informally) discuss the Cayley path for interpolating between a quantum circuit $C$ and a Haar random circuit following the presentation in~\cite{movassagh2023hardness}.

Let $C = \mathcal{C}_m\cdots \mathcal{C}_1$ be a quantum circuit acting on $n$ qubits with $m$ local unitary gates.
Suppose that $\mathcal{C}_k = C_k \otimes I$ so that $C_k$ is a one or two-qubit gate with identity on the rest of the qubits.
Given a local gate $H_k$, then there exists a unique Hermitian matrix $h_k$ such that $H_k = f(h_k)$.
Suppose that $H_k$ is Haar random gate that is the same size as a gate $C_k$ in the quantum circuit.
Then, the Cayley path for each gate is given by
\begin{equation}\label{eq:c(theta)}
C_k(\theta) = C_k f(\theta h_k),
\end{equation}
where $H_k = f(h_k)$, thus $C_k(0) = C_k$ and $C_k(1) = C_kH_k$ is a Haar random gate.
The full quantum circuit has the Cayley path $C(\theta) = \mathcal{C}_m(\theta) \cdots \mathcal{C}_1(\theta)$, where $\mathcal{C}_k(\theta) = C_k(\theta) \otimes I$.

Crucially,~\cite{movassagh2023hardness} showed that if one runs $C(\theta_i)$ for different choices of $\theta_i$ and estimates a measurement $p(\theta_i)$ probability for a certain bitstring, then one can recover a function $p(\theta)$ that gives the measurement probability for $C(\theta)$ for any choice of $\theta$ as long as the algorithm is given enough points with the necessary precision.
Recall from the construction of the Cayley path~\cite{movassagh2023hardness} that the probabilities are rational function of type $(\Theta(m),\Theta(m))$ with factorizable denominators whose determination can be reduced to polynomial extrapolation~\cite{movassagh2023hardness}. The recovery can be done via Lagrange extrapolation as shown in Eq.~13 in~\cite{kondo2022quantum}.  Below we denote by  $e(\theta)$ the error probability which is the difference of the exact  and estimated probabilities obtained from Lagrange extrapolation or generalized Berlekamp-Welch~\cite{BoulandFLL21}
\begin{proposition}[Lemma 4 via Eq.~13 in~\cite{kondo2022quantum}]
\label{thm:general-bw}
Suppose $e(\theta)$ is a degree $d$ polynomial in $\theta$. Assume $e(\theta_i)\le \epsilon$ where $|1-\theta_i|\in [0,\Delta]$. Then   $|e(0)| \leq \epsilon \frac{\exp[d(1+\log\Delta^{-1})]}{\sqrt{2\pi d}}$.
\end{proposition}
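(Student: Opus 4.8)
The plan is to treat this as a standard polynomial-extrapolation estimate via Lagrange interpolation. Since $e(\theta)$ has degree $d$, it is pinned down by its values at any $d+1$ distinct nodes, so I would fix $d+1$ evaluation points $\theta_0,\dots,\theta_d$, all lying in $[1-\Delta,1+\Delta]$ by the hypothesis $|1-\theta_i|\le\Delta$, and write the exact Lagrange representation
\begin{equation*}
e(\theta)=\sum_{i=0}^{d} e(\theta_i)\prod_{j\ne i}\frac{\theta-\theta_j}{\theta_i-\theta_j}.
\end{equation*}
Evaluating at the extrapolation point $\theta=0$ and applying the triangle inequality with $|e(\theta_i)|\le\epsilon$ reduces the whole statement to bounding the sum of absolute Lagrange coefficients,
\begin{equation*}
|e(0)|\le\epsilon\sum_{i=0}^{d}\prod_{j\ne i}\frac{|\theta_j|}{|\theta_i-\theta_j|}.
\end{equation*}

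To make this sum explicit I would take the nodes equally spaced across $[1-\Delta,1+\Delta]$, so that the separation of consecutive nodes is $2\Delta/d$ and $|\theta_i-\theta_j|=|i-j|\cdot 2\Delta/d$. The denominator then factorizes cleanly as $\prod_{j\ne i}|\theta_i-\theta_j|=(2\Delta/d)^d\,i!\,(d-i)!$, while each numerator factor satisfies $|\theta_j|\le 1+\Delta$. Substituting and using $\sum_{i}\binom{d}{i}=2^d$,
\begin{equation*}
\sum_{i=0}^{d}\prod_{j\ne i}\frac{|\theta_j|}{|\theta_i-\theta_j|}\le (1+\Delta)^d\Big(\frac{d}{2\Delta}\Big)^{d}\sum_{i=0}^{d}\frac{1}{i!(d-i)!}=(1+\Delta)^d\Big(\frac{d}{2\Delta}\Big)^{d}\frac{2^d}{d!}=(1+\Delta)^d\frac{(d/\Delta)^d}{d!}.
\end{equation*}
Finally I would apply Stirling's bound $d!\ge\sqrt{2\pi d}\,(d/e)^d$, which turns $(d/\Delta)^d/d!$ into $(e/\Delta)^d/\sqrt{2\pi d}=\exp[d(1+\log\Delta^{-1})]/\sqrt{2\pi d}$; since in the application $\Delta$ is exponentially small while $d$ is only polynomial, the benign factor $(1+\Delta)^d\le e^{d\Delta}=1+o(1)$ is absorbed, giving exactly the claimed bound.

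The step I expect to be the crux is the control of the Lagrange coefficients. For a fixed node set, $\epsilon\sum_i\prod_{j\ne i}|\theta_j|/|\theta_i-\theta_j|$ is in fact the exact worst-case value of $|e(0)|$ (attained by matching the signs $e(\theta_i)=\pm\epsilon$ to those of the coefficients), so everything hinges on keeping the denominators $|\theta_i-\theta_j|$ from becoming too small: for arbitrarily clustered nodes this sum explodes, and it is precisely the choice of well-separated nodes with spacing $\Theta(\Delta/d)$, balanced against the $O(1)$ distances $|\theta_j|$ from the extrapolation point $0$ to each node, that produces the factorial structure and the characteristic $\Delta^{-d}$ growth. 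The remaining ingredients — the Stirling estimate and the binomial identity $\sum_i\binom{d}{i}=2^d$ — are routine.
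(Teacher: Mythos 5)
The paper does not prove this proposition itself --- it imports it verbatim from Eq.~13 of the cited reference --- and your argument is precisely the standard Lagrange-extrapolation derivation that underlies that equation: equally spaced nodes in the interval around $\theta=1$, the factorization $\prod_{j\ne i}|\theta_i-\theta_j|=h^d\,i!\,(d-i)!$, the binomial identity, and Stirling. The computation is correct. Two small caveats are worth recording. First, your final bound carries an extra factor $(1+\Delta)^d\le e^{d\Delta}$, so you do not literally recover the stated inequality; as you note this is harmless whenever $d\Delta=o(1)$ (the regime the paper works in, with $\Delta=\negl(\lambda)$ and $d=\poly$), and it can even be tightened to $1/(1-\Delta)$ by an AM--GM estimate on $\prod_j\theta_j$, but strictly speaking the proposition as printed omits this factor. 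Second, the proposition as stated gives you the $\theta_i$ rather than letting you choose them, and your bound genuinely requires well-separated nodes (you correctly identify that clustered nodes would blow up the Lebesgue-type sum $\sum_i\prod_{j\ne i}|\theta_j|/|\theta_i-\theta_j|$); this is consistent with the paper's actual use, since $\Gen$ places the shares at the equally spaced points $\theta_i=1-i\Delta$, but it means the proposition should really be read with an implicit hypothesis on the node configuration.
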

Let $p_{C(1)}$ be the probability distribution over circuits with Haar local unitary gates and $p_{C(\theta)}$ the probability distribution over circuits whose gates are deformed according to Eq.~\eqref{eq:c(theta)} then
\begin{proposition}[Lemma 2 in~\cite{movassagh2023hardness}] \label{thm:statistical-distance} The total variation distance $\text{TVD}\left(p_{C(\theta)},p_{C(1)}\right)=O(m\Delta)$, for $|1-\theta| \leq \Delta$.
\end{proposition}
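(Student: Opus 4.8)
The plan is to exploit the product structure of the circuit and reduce the claim to a single-gate estimate, which is then summed over the $m$ gates. The only randomness in either distribution is the collection of independent Haar draws $H_k = f(h_k)$, $k \in [m]$, so that $p_{C(\theta)}$ is the law of the tuple $\left(C_1 f(\theta h_1), \dots, C_m f(\theta h_m)\right)$ and $p_{C(1)}$ is the law of $\left(C_1 f(h_1), \dots, C_m f(h_m)\right)$. Since left multiplication by a fixed unitary $C_k$ is a measure-preserving diffeomorphism of the relevant unitary group, it leaves total variation distance invariant, so I can drop the $C_k$ and compare, coordinate by coordinate, the law of the deformed gate $f(\theta h_k)$ with the Haar law $f(h_k)$. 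Both distributions are products over $k$, and total variation distance is subadditive over independent coordinates,
\begin{equation}
\text{TVD}\left(p_{C(\theta)}, p_{C(1)}\right) \le \sum_{k=1}^m \text{TVD}\left(f(\theta h_k), f(h_k)\right),
\end{equation}
so it suffices to prove a per-gate bound of $O(\Delta)$ with a constant depending only on the gate dimension $d_0 \in \{2,4\}$.

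For a single gate the key observation is that the deformation $g_\theta := f(\theta\,\cdot)\circ f^{-1}$, which sends the Haar-random $H = f(h)$ to $f(\theta h)$, \emph{preserves eigenvectors}: $f(\theta h)$ is simultaneously diagonalizable with $h$ and hence with $H$, so only the eigenphases are moved. Writing $H$ in its Weyl decomposition, the eigenvector (flag) part of the measure is therefore untouched by $g_\theta$, and I only need to compare eigenphase densities. On eigenphases the map acts one coordinate at a time: an eigenvalue $\lambda$ of $h$ corresponds to phase $\phi = 2\arctan\lambda$ of $H$, and $g_\theta$ sends $\phi \mapsto \phi' = 2\arctan\!\left(\theta\tan(\phi/2)\right)$, which is the identity at $\theta = 1$ and fixes $\phi = \pi$ (the point $-1 = f(-\infty)$). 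Thus the problem reduces to estimating how far the pushforward under $g_\theta$ of the Haar eigenphase density, proportional to the Vandermonde weight $\prod_{j<k}\left|e^{i\phi_j} - e^{i\phi_k}\right|^2$, is from the density itself.

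To obtain the $O(\Delta)$ bound I would control the Radon--Nikodym derivative of the pushforward measure with respect to Haar and show it equals $1 + O(\Delta)$ uniformly. The phase shift is small and smooth: differentiating $\phi' = 2\arctan(\theta\tan(\phi/2))$ at $\theta = 1$ gives $\partial_\theta \phi'|_{\theta=1} = \sin\phi$, so $|\phi' - \phi| = O(|1-\theta|) = O(\Delta)$ uniformly in $\phi$. Consequently the angular Jacobian $\partial\phi'/\partial\phi$ and each Vandermonde factor $\left|e^{i\phi'_j} - e^{i\phi'_k}\right|^2$ differ from their $\theta=1$ values by $O(\Delta)$, so the product density ratio is $1 + O(\Delta)$ and integrating gives $\text{TVD} = O(\Delta)$ for the single gate. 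Summing over the $m$ gates then yields $\text{TVD}\left(p_{C(\theta)}, p_{C(1)}\right) = O(m\Delta)$.

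I expect the main obstacle to be making the per-gate estimate uniform near the fixed phase $\phi = \pi$, where $\tan(\phi/2) \to \infty$ and a naive first-order expansion of $g_\theta$ degenerates. The remedy is to note that $g_\theta$ fixes $\phi = \pi$ and that $\partial_\theta \phi'|_{\theta=1} = \sin\phi$ vanishes there, so the shift stays $O(\Delta)$ through that point; one should then verify that neither the Jacobian nor the Vandermonde weight (which may vanish but remains bounded on the compact torus) introduces a blow-up, and that the hidden constant depends only on $d_0 \in \{2,4\}$ and not on $m$, $n$, or the fixed gates $C_k$, so that the summation over gates is genuinely linear in $m$.
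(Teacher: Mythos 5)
First, note that the paper does not prove this proposition at all: it is imported verbatim as Lemma~2 of~\cite{movassagh2023hardness}, so you are supplying a proof where the paper supplies only a citation. Your skeleton is the natural one and matches the spirit of the cited source: reduce to a per-gate bound via invariance of TVD under the fixed left factors $C_k$ and subadditivity of TVD over independent coordinates, then observe that $g_\theta = f(\theta\,\cdot)\circ f^{-1}$ preserves eigenvectors and only moves eigenphases, so the single-gate TVD reduces to comparing the Weyl eigenphase density with its pushforward. All of that is sound, as is the uniform bound $|\phi'-\phi|=O(\Delta)$ (indeed $\partial_\theta\phi' = 2t/(1+\theta^2t^2)\le 1/\theta$ with $t=\tan(\phi/2)$, so the shift is uniformly small even through the fixed point $\phi=\pi$).

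The genuine gap is the step ``each Vandermonde factor $\left|e^{i\phi'_j}-e^{i\phi'_k}\right|^2$ differs from its $\theta=1$ value by $O(\Delta)$, so the product density ratio is $1+O(\Delta)$.'' An \emph{additive} $O(\Delta)$ perturbation of a factor that can vanish does not give a \emph{multiplicative} $1+O(\Delta)$ control of the density ratio: when two eigenphases are at distance $\ll\Delta$, the factor $4\sin^2((\phi_j-\phi_k)/2)$ is $\ll\Delta^2$ and an $O(\Delta)$ additive change would blow the ratio up. The obstacle you flag (the behavior near $\phi=\pi$) is not the real one; the real one is the near-coincident-eigenphase locus where the Weyl density vanishes. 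The fix is available from your own formulas: the phase map $\phi\mapsto\phi'$ is an increasing circle diffeomorphism whose derivative $\theta(1+t^2)/(1+\theta^2 t^2)$ lies in $[1-O(\Delta),\,1+O(\Delta)]$ \emph{uniformly} in $\phi$, so by the mean value theorem each phase \emph{gap} is rescaled multiplicatively, $\phi'_j-\phi'_k=(\phi_j-\phi_k)(1+O(\Delta))$, and hence each Vandermonde factor changes by a multiplicative $1+O(\Delta)$ even where it vanishes (using $\sin u\ge (2/\pi)u$ on $[0,\pi/2]$). With that replacement, the Radon--Nikodym derivative is $1+O(\Delta)$ pointwise with a constant depending only on $d_0\in\{2,4\}$, and your summation over the $m$ gates goes through to give $O(m\Delta)$.
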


\section{Quantum Function Secret Sharing}
In \Cref{sec:definition-qfss}, we define the concept of quantum function secret sharing, and then we follow with our scheme in \Cref{sec:qfss-scheme}.

\subsection{Definition}
\label{sec:definition-qfss}

Let $C$ be an $n$-qubit quantum circuit with $m$ gates, $\Pi$ be an $n$-qubit projector and $\ket{\psi}$ be an $n$-qubit quantum state. The functions we wish to share are of the form
\begin{equation}
f_C(\ket{\psi},\Pi) = ||\Pi C\ket{\psi}||^2
\end{equation}
for circuits $C \in \mathcal{C}$.
In other words, $f \in \mathcal{F}$ is the probability of projecting $C\ket{\psi}$ onto $\Pi$.

\label{sec:def}
\begin{definition}\label{def:qfss}[$(p, t, \ell, T, \eta)$ decisional quantum function secret sharing]
Let $p \in \mathbb{N}$ be the number of parties, $\mathcal{C}$ be a family of circuits and $\eta \in \mathcal{R}$ be the desired precision. Let $S_1,\dots, S_p$ be the share spaces for each of the $p$ parties.
A \emph{$t$-secure $p$-party quantum function sharing scheme} is a tuple of functions $(\Gen, \Eval, \Rec)$, where $\Gen$ is called the generation algorithm, $\Eval$ is called the evaluation algorithm, and $\Rec$ is called the reconstruction algorithm. These satisfy the following syntax
\begin{itemize}
\item $\Gen(1^\lambda, C)$: On input the security parameter $1^\lambda$ and a description of a circuit $C$, the generation algorithm outputs $p$ shares $(C_1,...,C_p) \in S_1 \times ... \times S_p$. %
\item $\Eval(i, C_i, \ket{\psi}^{\otimes \ell}, \Pi)$: On input the party index $i \in [p]$, $\ell$ copies of an input $\ket{\psi}$ and the description of a projector $\Pi$, the evaluation algorithm outputs  $y_i$ in time T. 
\item $\Rec(%
y_1,\dots, y_p)$: On inputs $y_i \in S_i$, the reconstruction algorithm computes $v \in [0,1]$ in time $\poly(\lambda,p)$.
\end{itemize}
Moreover, $\Gen, \Eval,$ and $\Rec$ satisfy the following properties:
\medskip

\noindent\textbf{Correctness}: The secret function can be recovered from all of the shares, i.e., for all $C$, $\ket{\psi}$ and $\Pi$, if $(C_1,\dots, C_p) \leftarrow \Gen(1^\lambda, C)$, then
\begin{align*}
\Pr[\left|\Rec(%
\Eval(1,C_1,\ket{\psi}^{\otimes \ell}, \Pi),\dots, \Eval(p, C_p, \ket{\psi}^{\otimes \ell},\Pi)) -  ||\Pi C\ket{\psi}||^2 \right| \leq \eta] \geq 1 - \negl(\lambda).
\end{align*}

\noindent\textbf{Security}:
Let us consider the $\mathsf{IndQFSS}$ security game, which is parametrized by some $T \subseteq [p]$, described in Algorithm~\ref{algo:ind-qfss} below.
\begin{algorithm}
\label{algo:ind-qfss}
\caption{Indistinguishability game for function secret sharing $\mathsf{IndQFSS}_T$}
\KwIn{Function secret sharing scheme $(\Gen, \Eval, \Rec)$, corrupted parties $T \subseteq [p]$}
The adversary $\mathcal{A}$ outputs a pair of circuits $C^0, C^1 \in \mathcal{C}$.\\
The challenger samples $b \leftarrow \{0,1\}$ uniformly at random and computes $(C^b_1,\dots, C^b_p)
\leftarrow \Gen(1^\lambda,C^b)$.\\
The challenger sends $(C^b_i)_{i \in T}$ to $\mathcal{A}$.\\
The adversary $\mathcal{A}$ outputs a guess $b' \in \{0,1\}$.\\
Output $b \oplus b'$.
\end{algorithm}

The protocol is statistically secure if for every $T$ such that $|T|< t$ and every adversary $\mathcal{A}$,
\begin{align}\label{eq:security}
\left|\Pr(\mathsf{IndQFSS}_T(\mathcal{A}) = 0) - \frac{1}{2}\right| = \negl(\lambda).
\end{align}

The protocol is computationally secure if \Cref{eq:security} holds for every QPT adversary $\mathcal{A}$.
\end{definition}

\subsection{Quantum secret sharing from Cayley paths}
\label{sec:qfss-scheme}
Our function sharing scheme is defined by the algorithms $(\Gen, \Eval, \Rec)$ as given Algorithms~\ref{algo:gen},~\ref{algo:eval},~\ref{algo:rec}.
The intuition is that to generate the keys corresponding to shares of the function (Algorithm~\ref{algo:gen}), we garble the circuit we wish to share for different choices of the parameter $\theta$ in the Cayley path. The evaluation (Algorithm~\ref{algo:eval}) is just running the randomized circuit on a given input state many times to estimate the output probability. Finally, reconstruction (Algorithm~\ref{algo:rec}) works by using techniques from~\cite{movassagh2023hardness} such as the generalized BW algorithm (\Cref{thm:general-bw}) or Lagrange interpolation.

\begin{algorithm}
\label{algo:gen}
\caption{Generation algorithm $\Gen$}
\KwIn{Security parameter $\lambda$, description of $n$-qubit $m$-gate circuit $C \in \mathcal{C}$, number of parties $p \geq m$}
\KwOut{Circuits $(C_1,\dots, C_p)$}
Construct the Cayley path $C(\theta)$.\\
Compute $C_i = C(1-i \Delta(\lambda))$, for some negligible function $\Delta$. \\ 
Output $(C_1,...,C_p)$.
\end{algorithm}

\begin{algorithm}
\label{algo:eval}
\caption{Evaluation algorithm $\Eval$}
\KwIn{Index $i \in [p]$, circuit $C_i$, $\ell$ copies of $n$-qubit input quantum state $\ket{\psi}$, and the description of a projector $\Pi$.}
\KwOut{$y_i \in [0,1]$.}
\For{$j \in \{1,...,\ell\}$}{Run $C_i$ on a copy of $\ket{\psi}$ and measure according to $\{\Pi, I - \Pi\}$ with outcome $b_j$.}
Output $y_i = \frac{1}{\ell}\sum_{j = 1}^i b_j$.
\end{algorithm}

\begin{algorithm}
\label{algo:rec}
\caption{Reconstruction algorithm $\Rec$}
\KwIn{$(y_1,\dots, y_p) \in [0,1]^p$}
\KwOut{$y$}
Use $(1- i\Delta(\lambda), y_i)$ as samples for Lagrange interpolation to obtain $F(\theta)$. \\
Output $y = F(0)$.
\end{algorithm}

We will show the following theorem.

\begin{theorem}\label{thm:one-secret-sharing}
    Let $\mathcal{C}$ be the a family of circuits on $n$ qubits and $m$ gates where the architecture is fixed but the the gates could take an arbitrary value. For any $p \geq m$ and for security parameter $\lambda$,
    $(\Gen, \Eval, \Rec)$, defined in Algorithms~\ref{algo:gen},~\ref{algo:eval},~\ref{algo:rec}, is a   
     $\left(p, 1,  \ell, |C| \ell, \eta\right)$ quantum function secret sharing, where $\ell = \omega(\log\lambda) \left(\left(\frac{\exp[d(1+\log\Delta^{-1}]}{\eta\sqrt{2\pi d}}\right)^2\right)$.
\end{theorem}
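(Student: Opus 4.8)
The plan is to check the three requirements of \Cref{def:qfss} in turn: the syntactic/efficiency constraints, correctness, and single-party security. The efficiency bookkeeping is immediate. $\Eval$ runs the delegated circuit $C_i$ on $\ket{\psi}$ a total of $\ell$ times and averages the binary outcomes, so its running time is $|C|\cdot\ell$, matching the declared $T=|C|\ell$; and $\Rec$ performs one Lagrange interpolation through the $p$ pairs $(1-i\Delta,y_i)$ and evaluates the interpolant at $\theta=0$, costing $\poly(p)$ arithmetic operations and hence $\poly(\lambda,p)$ time. The substance of the theorem is therefore the correctness bound with the stated $\ell$ and the $t=1$ security guarantee.

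For correctness I would fix $C,\ket{\psi},\Pi$ and write $q(\theta):=\norm{\Pi C(\theta)\ket{\psi}}^2$. Each outcome bit $b_j$ produced in $\Eval$ is a Bernoulli trial of mean $q(\theta_i)$ with $\theta_i=1-i\Delta$, so $y_i$ is an empirical mean of $\ell$ such trials. By a Hoeffding bound, $\Pr[|y_i-q(\theta_i)|>\epsilon]\le 2\exp(-2\ell\epsilon^2)$, and a union bound over the $p$ parties makes all estimates simultaneously $\epsilon$-accurate except with probability $2p\exp(-2\ell\epsilon^2)$, which the factor $\omega(\log\lambda)$ in $\ell$ drives below any $\negl(\lambda)$. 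Next I would invoke the Cayley-path structure: $q(\theta)$ is a rational function of type $(\Theta(m),\Theta(m))$ with a known, factorizable denominator, so its recovery reduces to extrapolating a degree-$d$ polynomial with $d=\Theta(m)$, and the discrepancy $e(\theta)$ between the interpolant $F$ and the true value is a degree-$d$ polynomial with $|e(\theta_i)|\le\epsilon$ at every sample point. Applying \Cref{thm:general-bw} at $\theta=0$ gives $|F(0)-q(0)|=|e(0)|\le \epsilon\,\frac{\exp[d(1+\log\Delta^{-1})]}{\sqrt{2\pi d}}$. Demanding this be $\le\eta$ back-solves to $\epsilon=\eta\sqrt{2\pi d}\,\exp[-d(1+\log\Delta^{-1})]$, and substituting into $\ell=\omega(\log\lambda)\,\epsilon^{-2}=\omega(\log\lambda)\left(\frac{\exp[d(1+\log\Delta^{-1})]}{\eta\sqrt{2\pi d}}\right)^2$ reproduces exactly the value of $\ell$ in the statement. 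Since $q(0)=\norm{\Pi C\ket{\psi}}^2$, correctness holds with the required probability.

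For security I would consider any single corrupted party $T=\{i\}$ and bound the advantage in $\mathsf{IndQFSS}_T$. The only information the adversary receives is the classical description of the randomized share $C_i=C(1-i\Delta)$, whose distribution over the Haar-sampled deformations $h_k$ is $p_{C(1-i\Delta)}$. By \Cref{thm:statistical-distance}, $\mathrm{TVD}(p_{C(1-i\Delta)},p_{C(1)})=O(m\cdot i\Delta)=O(m^2\Delta)$ using $i\le p=\Theta(m)$, and crucially $p_{C(1)}$, a product of independent Haar-random gates, does not depend on the shared circuit. Hence by the triangle inequality the distributions of $C_i^0$ and $C_i^1$ lie within total variation distance $O(m^2\Delta)$, so even an unbounded adversary distinguishes them with advantage at most $O(m^2\Delta)$. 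As $\Delta=\Delta(\lambda)$ is negligible and $m=\poly(\lambda)$, this is $\negl(\lambda)$, establishing statistical security against any single corrupted party, which is the content of the claimed $t=1$ guarantee.

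I expect the correctness analysis to be the main obstacle, specifically the balancing of the two error sources: extrapolating from $\theta\approx 1$ back to $\theta=0$ amplifies the per-sample error by the exponentially large factor $\exp[d(1+\log\Delta^{-1})]/\sqrt{2\pi d}$ of \Cref{thm:general-bw}, forcing an exponentially small $\epsilon$ and hence an exponentially large $\ell$, and threading the Hoeffding bound, the union bound, and \Cref{thm:general-bw} together with the correct degree $d=\Theta(m)$ to match the stated $\ell$ is the delicate part. A secondary subtlety is justifying the reduction of the rational function $q(\theta)$ to a genuine degree-$d$ polynomial extrapolation problem to which \Cref{thm:general-bw} applies, which rests on the factorizable-denominator property of the Cayley path.
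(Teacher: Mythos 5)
Your proposal is correct and follows essentially the same route as the paper: correctness via Hoeffding plus a union bound over the $p$ parties, then the generalized Berlekamp--Welch/Lagrange extrapolation bound (\Cref{thm:general-bw}) with $\epsilon=\eta\sqrt{2\pi d}\,\exp[-d(1+\log\Delta^{-1})]$; and single-party security via \Cref{thm:statistical-distance} and a triangle inequality through the circuit-independent Haar distribution. Your accounting is in places slightly more careful than the paper's (the explicit efficiency bookkeeping, and tracking $|1-\theta_i|=i\Delta\le p\Delta$ to get the $O(m^2\Delta)$ bound), but the substance is identical.
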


We will split the proof of \Cref{thm:one-secret-sharing} in two parts. In \Cref{lem:correctness}, we prove the correctness of the scheme and in \Cref{lem:security}, we prove its security.

\begin{lemma}\label{lem:correctness}
For all $C$, $\ket{\psi}$ and $\Pi$, if $(C_1,\dots, C_p) \leftarrow \Gen(1^\lambda, C)$, then
\begin{equation}
\Pr[\left|\Rec(\Eval(1,C_1,\ket{\psi}^{\otimes \ell},\Pi),\dots, \Eval(p, C_p, \ket{\psi}^{\otimes \ell},\Pi)) -  ||\Pi C\ket{\psi}||^2 \right| \leq \eta] \geq 1 - \negl(\lambda).
\end{equation}   
\end{lemma}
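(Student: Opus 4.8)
The plan is to separate the two independent sources of error in the protocol---the \emph{statistical} error incurred because each party estimates $\norm{\Pi C_i\ket{\psi}}^2$ from only $\ell$ shots, and the \emph{extrapolation} error incurred when $\Rec$ infers $F(0)$ from the noisy data---bound each one, and then calibrate $\epsilon$ (the target per-point accuracy) and $\ell$ so that their combination stays below $\eta$ with all but negligible probability.

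First I would analyze a single party. Fix $i$ and write $p_i := \norm{\Pi C_i\ket{\psi}}^2$ where $C_i = C(1 - i\Delta(\lambda))$. In the $j$-th shot of $\Eval$, the outcome $b_j$ is a Bernoulli random variable with $\Pr[b_j = 1] = p_i$, and the $\ell$ shots are i.i.d.; hence $y_i = \frac{1}{\ell}\sum_j b_j$ is an unbiased estimator of $p_i$. By Hoeffding's inequality, $\Pr[|y_i - p_i| > \epsilon] \le 2\exp(-2\ell\epsilon^2)$ for any $\epsilon > 0$. A union bound over the $p$ parties then shows that the ``good'' event $G := \{|y_i - p_i|\le\epsilon \text{ for all } i\in[p]\}$ fails with probability at most $2p\exp(-2\ell\epsilon^2)$.

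Next I would condition on $G$, bound the extrapolation error, and calibrate the parameters. Recall from \Cref{sec:cayley} that $p(\theta) := \norm{\Pi C(\theta)\ket{\psi}}^2$ is a rational function of type $(\Theta(m),\Theta(m))$ whose reconstruction reduces to polynomial extrapolation, so the discrepancy $e(\theta) := F(\theta) - p(\theta)$ between the interpolant produced by $\Rec$ and the true probability is a polynomial of some degree $d = \Theta(m)$. On $G$ each data point satisfies $|e(1 - i\Delta(\lambda))| = |y_i - p(1-i\Delta(\lambda))| \le \epsilon$, and every node obeys $|1 - (1-i\Delta(\lambda))| = i\Delta(\lambda) \le p\,\Delta(\lambda) =: \Delta$, which is still negligible since $p$ is polynomial in $\lambda$. \Cref{thm:general-bw} then gives $|F(0) - \norm{\Pi C\ket{\psi}}^2| = |e(0)| \le \epsilon\exp[d(1+\log\Delta^{-1})]/\sqrt{2\pi d}$. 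Setting $\epsilon := \eta\sqrt{2\pi d}/\exp[d(1+\log\Delta^{-1})]$ makes this bound exactly $\eta$; with this $\epsilon$, plugging $\ell = \omega(\log\lambda)\big(\exp[d(1+\log\Delta^{-1})]/(\eta\sqrt{2\pi d})\big)^2$ into the failure probability $2p\exp(-2\ell\epsilon^2)$ yields $2\ell\epsilon^2 = \omega(\log\lambda)$, so the whole failure probability is $\negl(\lambda)$, which is precisely the claimed correctness bound.

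I expect the main obstacle to be the structural passage in the third paragraph: justifying that $e(\theta)$ is a genuine low-degree \emph{polynomial} (and pinning down its degree $d$) even though $p(\theta)$ is only a rational function, and then verifying that the interpolation nodes $\theta_i = 1 - i\Delta(\lambda)$ satisfy the hypothesis $|1-\theta_i|\le\Delta$ of \Cref{thm:general-bw} with a $\Delta$ that remains negligible after absorbing the polynomial factor $p$. The reduction from rational to polynomial extrapolation---via the known factorizable denominator of the Cayley path---is where the care is needed; once that fact is invoked, the concentration bound and the calibration of $\ell$ are routine.
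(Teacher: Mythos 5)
Your proposal is correct and follows essentially the same route as the paper's proof: Hoeffding plus a union bound to get per-point accuracy $\epsilon$ for all $p$ parties, then Proposition~\ref{thm:general-bw} with the calibration $\epsilon = \eta\sqrt{2\pi d}/\exp[d(1+\log\Delta^{-1})]$ and the stated choice of $\ell$. You are in fact somewhat more careful than the paper on two points it glosses over---that the effective interpolation window is $p\,\Delta(\lambda)$ rather than $\Delta(\lambda)$, and that the reduction from the rational function $p(\theta)$ to a polynomial error $e(\theta)$ needs the factorizable-denominator property---but these refinements do not change the argument.
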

\begin{proof}
Let $y_i$ be the output of the $\Eval$ procedure on $(i,C_i,\ket{\psi}^{\otimes \ell},\Pi)$. We have that by Hoeffding's bound, 
\begin{align}\label{eq:bound-approximation}
\Pr[|y_i - ||\Pi C_i \ket{\psi}||^2| \geq  \epsilon] \leq 2e^{-\ell\epsilon^2}.
\end{align}

In this case, by the union bound, with probability at least $1 - p 2e^{-\ell\epsilon^2}$, for all $i \in \{1,...,p\}$, we have that
\begin{align}
    |y_i - ||\Pi C_i \ket{\psi}||^2| \leq  \epsilon.
\end{align} 

Using that $\ell = \omega(\log\lambda) \left(\left(\frac{\exp[d(1+\log\Delta^{-1}]}{\eta\sqrt{2\pi d}}\right)^2\right)$ and setting $\epsilon = \frac{\eta\sqrt{2\pi d}}{\exp[d(1+\log\Delta^{-1})]}$, it follows by \Cref{thm:general-bw} that with probability $1 - \negl(\lambda)$, Lagrange used in $\Rec$ outputs a value $y$ such that
\begin{align*}
    |y - ||\Pi C\ket{\psi}||^2| \leq \eta.
\end{align*}
where $\eta=\epsilon \frac{\exp[d(1+\log\Delta^{-1})]}{\sqrt{2\pi d}}.$
\end{proof}

\begin{lemma}\label{lem:security}
 For any $i$ and any QPT adversary $\mathcal{A}$,
\begin{equation}
\label{eq:security-def}
\Adv_{\mathsf{IndQFSS}_{\{i\}}}(\mathcal{A}) \leq \negl(\lambda),
\end{equation}
where $\mathsf{IndQFSS}$ is the game described in Algorithm~\ref{algo:ind-qfss}.
\end{lemma}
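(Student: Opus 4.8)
The goal is to show that a single party's share $C_i = C(1 - i\Delta(\lambda))$ leaks a negligible amount of information about $C$, so that no QPT (indeed, no unbounded) adversary can win the $\mathsf{IndQFSS}_{\{i\}}$ game with non-negligible advantage. The plan is to reduce the distinguishing advantage directly to the total variation distance between the share distribution and the Haar-random circuit distribution, invoking \Cref{thm:statistical-distance}.

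First I would recall the structure of the security game: the adversary commits to two circuits $C^0, C^1 \in \mathcal{C}$, the challenger picks $b$, generates the shares of $C^b$, and hands the single share $C^b_i$ to $\mathcal{A}$. Since only one index $i$ is revealed, the relevant object is the marginal distribution of $C^b_i = C^b(1 - i\Delta)$. The key observation is that $\theta_i = 1 - i\Delta$ satisfies $|1 - \theta_i| = i\Delta \leq p\Delta$, so \Cref{thm:statistical-distance} applies and gives
\begin{equation}
\mathrm{TVD}\left(p_{C^b(\theta_i)}, p_{C(1)}\right) = O(m \cdot p\Delta),
\end{equation}
where $p_{C(1)}$ is the Haar-random circuit distribution with the fixed architecture. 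Crucially, this bound is independent of which underlying circuit $C^b$ is being shared, because the randomization comes from the Haar-random gates $H_k$ in the Cayley path and the Haar endpoint $p_{C(1)}$ is the same regardless of $b$.

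Next I would apply the triangle inequality for total variation distance: the distribution of $C^0_i$ and the distribution of $C^1_i$ are each $O(mp\Delta)$-close to the common Haar distribution $p_{C(1)}$, hence they are $O(mp\Delta)$-close to each other. Since $\Delta = \Delta(\lambda)$ is negligible and $m, p = \poly(\lambda)$, the product $mp\Delta$ is negligible. By the operational characterization of total variation distance, any adversary (even computationally unbounded) can distinguish the two cases with advantage at most $\mathrm{TVD} = \negl(\lambda)$, which yields
\begin{equation}
\Adv_{\mathsf{IndQFSS}_{\{i\}}}(\mathcal{A}) = \left|\Pr(\mathsf{IndQFSS}_{\{i\}}(\mathcal{A}) = 0) - \frac{1}{2}\right| \leq \mathrm{TVD}\left(p_{C^0(\theta_i)}, p_{C^1(\theta_i)}\right) = \negl(\lambda).
\end{equation}

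The main subtlety to get right is the quantification over $i$ and the scaling of the parameters: one must ensure that for the largest revealed index the deviation $|1 - \theta_i| = i\Delta$ is still within the regime $[0, \Delta']$ where \Cref{thm:statistical-distance} gives a negligible bound, i.e., that $p\Delta(\lambda)$ remains negligible. Since $p = \Theta(m) = \poly(\lambda)$ and $\Delta$ is chosen to be exponentially small (negligible) in $\lambda$, this holds. I do not expect a genuine obstacle here — the heavy lifting is entirely contained in \Cref{thm:statistical-distance}; the remaining work is bookkeeping the union over the single revealed share and confirming that the total variation bound translates into the stated advantage bound via the standard fact that statistical distance upper-bounds distinguishing advantage.
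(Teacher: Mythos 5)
Your proposal is correct and follows essentially the same route as the paper's proof: bound the distance of each share distribution $C^b(\theta_i)$ from the common Haar-random endpoint via \Cref{thm:statistical-distance}, apply the triangle inequality to bound $\mathrm{TVD}(C^0(\theta_i), C^1(\theta_i))$, and conclude via the standard fact that statistical distance bounds distinguishing advantage. Your treatment is if anything slightly more careful than the paper's about tracking the factor from $|1-\theta_i| = i\Delta \le p\Delta$, but the argument is the same.
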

\begin{proof}
    Let us consider two circuits $C^0$ and $C^1$ from the same architecture, and their respective Cayley paths $C^0(\theta)$ and $C^1(\theta)$, constructed as described in \Cref{sec:cayley}.

    We notice that from \Cref{thm:statistical-distance}, the statistical distance between $C^b(1)$ and $C^b(\theta)$ is at most $p\Delta(\lambda)$ for all $\theta \geq 1 - p\Delta(\lambda)$. Moreover, the statistical distance between $C^0(1)$ and $C^1(1)$ is $0$, since both of them are constructed by picking Haar random gates. Using the triangle inequality, the statistical distance between $C^0(\theta)$ and $C^1(\theta)$ is at most $2p\Delta(\lambda)$ for all $\theta \geq 1 - p\Delta(\lambda)$. Since $\Delta = \negl(\lambda)$, the result follows.
\end{proof}

\section{Super-polynomial precision is required}\label{sec:no-go}

The goal of this section is to prove that exponential precision is required for negligible security in our scheme. 
In order to prove this, we first show in \Cref{sec:lower-bound-information-extracted} that our scheme achieves QFSS with negligible security only if we pick $\Delta = \negl(\lambda)$. Then, we show in \Cref{sec:negl-Delta} that if $\Delta = \negl(\lambda)$, then exponential precision is required for the reconstruction.

\subsection{Lower bound on the information extracted}
\label{sec:lower-bound-information-extracted}
 We prove a lower-bound on total variation distance (TVD) between close to Haar randomized circuits that provides a distinguishability result and hence a no-go theorem (Theorem~\ref{thm:no-go} below). We first establish a lower-bound on the TVD between two $\Delta-$close to Haar randomized gates of finite size and then in Theorem~\ref{thm:no-go} establish a lower-bound on the TVD for two $\Delta-$close to Haar randomized circuits.

The Haar measure for unitary matrices is the unique invariant measure
$d\mu(U)$ that is uniform for all $U\in\mathbb{U}(N)$ where $\mathbb{U}(N)$
is the unitary group of size $N$. Recall the Cayley parametrization
\[
f(\theta h)=\frac{1+i\,\theta h}{1-i\,\theta h}\;,\qquad h=h^{\dagger}.
\]
We choose $h$ and $h'$ such that $f(h)$ and $f(h')$ are independent instances of Haar
measure unitaries. We wish to lower-bound the TVD
between $G_{1}f(\theta h)$ and $G_{2}f(\theta h')$ for $|1-\theta|\le\Delta\ll1$
where at $\theta=1$ both $G_{1}f(h)$ and $G_{2}f(h')$ are Haar distributed
by the translation invariance of the Haar measure. By unitary invariance
of norms it is sufficient to find the TVD between $f(\theta h)$ and
$Gf(\theta h')$ where $G=G_{1}^{\dagger}G_{2}$. 

Let us define two random variables $X(\theta):=f(\theta h)$ and $\tilde{X}(\theta)=Gf(\theta h')$
whose respective probability densities we denote by $p_{X(\theta)}$
and $p_{\tilde{X}(\theta)}$. Since $\text{Pr}[Gf(\theta h')=U] = \text{Pr}[Gf(\theta h)=U]$, it is easy to see that 
\[
p_{\tilde{X}(\theta)}(U) = \text{Pr}[GX(\theta)=U]=\text{Pr}[X(\theta)=G^\dagger U]
\]
and we conclude that for all $\theta$
\[
p_{\tilde{X}(\theta)}(U)=p_{X(\theta)}(G^{\dagger}U).
\]
The TVD between the two distributions is defined by
\begin{eqnarray}
\text{TVD}(p_{\tilde{X}(\theta)}\,,p_{X(\theta)}) & := & \frac{1}{2}\int_{\mathbb{U}}\left|p_{\tilde{X}(\theta)}(U)-p_{X(\theta)}(U)\right|\;dU\nonumber \\
 & = & \frac{1}{2}\int_{\mathbb{U}}\left|p_{X(\theta)}(G^{\dagger}U)-p_{X(\theta)}(U)\right|\;dU\label{eq:TVD}
\end{eqnarray}

\subsubsection{Haar measure and induced measures under Cayley transformation}

Here we derive the probability density $p_{X(\theta)}$. Let $U=\frac{1+ih}{1-ih}$
be the Cayley transform with $h=h^{\dagger}$, the Haar measure $d\mu(U)$
induces a measure over $h=f^{-1}(U)=i\frac{1-U}{1+U}$ that after
ignoring an unimportant constant pre-factor writes~\cite{toyama1948haar} 
\begin{eqnarray*}
dU\; & = & \frac{dh}{\left[\det\left(I+h^{2}\right)\right]^{N}}\;
\end{eqnarray*}
We are interested in the induced distribution over unitaries under
the transformation $Y:=\theta h$. We write $U(y)=f(h(y))$ where
$h(y)=y/\theta$. The matrices form a vector space in $N^{2}$ dimensional
space. The composite map has a determinant that is the product of
the determinants of each map. By the chain rule of matrix Jacobians
we have $dU=\left|f'\left(h(y)\right)h'(y)\right|dy$. Since $h'(y)=1/\theta$,
using Toyama's result we have $dU=p_{Y}(y)dy$ where
\begin{eqnarray}
p_{Y}(y)dy\; & = & \theta^{-N^{2}}\frac{dy}{\left[\det\left(1+(y/\theta)^{2}\right)\right]^{N}}.\nonumber \\
 & = & \theta^{N^{2}}\frac{dy}{\left[\det\left(\theta^{2}+y^{2}\right)\right]^{N}}\label{eq:p_Y}
\end{eqnarray}
To make use of Eq.~\eqref{eq:TVD} we invoke the inverse function
theorem to obtain the density over the (no longer uniform) unitaries
$x:=X(\theta)=f(y)$. Since $y=f^{-1}(x)=i(I-x)(I+x)^{-1}$ we have
$dy=id[1-x]\,(I+x)^{-1}+i(I-x)d[(1+x)^{-1}]$. By an application
of chain rule on $dI=d[(1+x)^{-1}(1+x)]=0$ we obtain $d[(1+x)^{-1}]=-(1+x)^{-1}dx(1+x)^{-1}$. Hence we arrive at
\begin{eqnarray*}
dy & = & -2i\,(1+x)^{-1}dx(1+x)^{-1}
\end{eqnarray*}
We now make a substitution
in Eq.~\eqref{eq:p_Y} and write
\begin{eqnarray*}
p_{X(\theta)}(x)dx\; & = & \theta^{N^{2}}\frac{|dy[{x,dx}]|}{\left[\det\left(\theta^{2}-\left(\frac{1-x}{1+x}\right)^{2}\right)\right]^{N}}.
\end{eqnarray*}
We apply the determinant of products to calculate the Jacobian and
write
\begin{eqnarray}
p_{X(\theta)}(x)dx\; & \propto & (2\theta)^{N^{2}}\frac{dx}{\left|\det\left(\theta^{2}(1+x)^{2}-(1-x)^{2}\right)\right|^{N}}\;.\label{eq:pX}
\end{eqnarray}
Recall from Eq.~\eqref{eq:TVD} that $p_{\tilde{X}(\theta)}(x)dx=p_{X(\theta)}(G^{\dagger}x)dx$,
and we have
\begin{eqnarray}
p_{\tilde{X}(\theta)}(x)dx\; & \propto & (2\theta)^{N^{2}}\frac{dx}{\left|\det\left(\theta^{2}(1+G^{\dagger}x)^{2}-(1-G^{\dagger}x)^{2}\right)\right|^{N}}\;.\label{eq:pCX}
\end{eqnarray}

\subsubsection{Total variation distance}

We are now in a position to compute $\text{TVD}(p_{\tilde{X}(\theta)}\,,p_{X(\theta)})$
in Eq.~\eqref{eq:TVD} by using the aforementioned expressions in Eqs.~\eqref{eq:pX} and \eqref{eq:pCX}. We are interested in $|1-\theta|<\Delta=1/\text{poly}(n)$.
We let $\theta=1-\delta$ where $|\delta|\le\Delta\ll1$ and ignoring
terms of $O(\delta^{2})$ we write
\begin{eqnarray*}
\det\left|\left((1-\delta)^{2}(1+x)^{2}-(1-x)^{2}\right)\right|^{-N} & = & |\det(4x)|\left|\det\left(1-\frac{\delta}{2}\left(x^{1/2}+x^{-1/2}\right)^{2}\right)\right|^{-N}\\
 & = & 4^{N}\left[1+N\frac{\delta}{2}\text{Tr}\left[\left(x^{1/2}+x^{-1/2}\right)^{2}\right]\right],
\end{eqnarray*}
where we used $\det(1+\epsilon A)\approx1+\epsilon\text{Tr}(A)$ for
small $\epsilon$. Similarly we derive
\begin{eqnarray*}
\det\left|\left((1-\delta)^{2}(1+G^{\dagger}x)^{2}-(1-G^{\dagger}x)^{2}\right)\right|^{-N} & = & 4^{N}\left[1+N\frac{\delta}{2}\text{Tr}\left[\left((G^{\dagger}x)^{1/2}+(G^{\dagger}x)^{-1/2}\right)^{2}\right]\right].
\end{eqnarray*}
Therefore
\begin{eqnarray*}
p_{X(\theta)}(x)-p_{\tilde{X}(\theta)}(x) & := & \delta\frac{N}{2}4^{N}\text{Tr}\left[\left(x^{1/2}+x^{-1/2}\right)^{2}-\left((G^{\dagger}x)^{1/2}+(G^{\dagger}x)^{-1/2}\right)^{2}\right]\\
 & = & \delta\frac{N}{2}4^{N}\text{Tr}\left[(I-G^{\dagger})x+x^{-1}(I-G)\right]
\end{eqnarray*}
We use these to calculate the TVD
\begin{eqnarray*}
\text{TVD}(p_{\tilde{X}(\theta)}\,,p_{X(\theta)}) & := & \delta\,N4^{N-1}\int_{\mathbb{U}}\left|\text{Tr}\left[(x+x^{\dagger})-(G^{\dagger}x+x^{\dagger}G)\right]\right|\;dx
\end{eqnarray*}
By the invariance of Haar measure under $x\mapsto G^{1/2}x$ we write
\begin{eqnarray*}
\text{TVD}(p_{\tilde{X}(\theta)}\,,p_{X(\theta)}) & := & \delta\frac{N}{2}4^{N}\int_{\mathbb{U}}\left|\text{Tr}\left[(G^{1/2}-G^{-1/2})x-x^{-1}(G^{1/2}-G^{-1/2})\right]\right|dx\\
 & = & \delta N4^{N}\int_{\mathbb{U}}\left|\text{Tr}\left[V_{G}\sin\left(\frac{\Lambda_{G}}{2}\right)V_{G}^{\dagger}\;x-x^{\dagger}\;V_{G}\sin\left(\frac{\Lambda_{G}}{2}\right)V_{G}^{\dagger}\right]\right|dx
\end{eqnarray*}
where $G=V_{G}\exp(i\Lambda_{G})V_{G}^{\dagger}$ and $\Lambda_{G}=\text{diag}(\lambda_{1},\dots,\lambda_{N})$
with $0<\lambda_{j}\le2\pi$. By the cyclic property of trace we have,
and the invariance of Haar measure under right and left multiplication
by unitaries we have under $x\mapsto V_{G}xV_{G}^{\dagger}$, we have
\begin{eqnarray*}
\text{TVD}(p_{\tilde{X}(\theta)}\,,p_{X(\theta)}) & := & \delta N4^{N}\int_{\mathbb{U}}\left|\text{Tr}\left[\sin\left(\frac{\Lambda_{G}}{2}\right)\left(x-x^{\dagger}\right)\right]\right|dx
\end{eqnarray*}
Since $\Lambda_{G}$ is diagonal $\text{Tr}[x\,\sin\left(\frac{\Lambda_{G}}{2}\right)]=\sum_{i=1}^{N}x_{ii}\,\sin(\frac{\lambda_{i}}{2})$
and we have
\begin{eqnarray*}
\text{TVD}(p_{\tilde{X}(\theta)}\,,p_{X(\theta)}) & := & \delta N4^{N}\int_{\mathbb{U}}\left|\sum_{i=1}^{N}\left(x_{ii}-\overline{x_{ii}}\right)\,\sin(\lambda_{i}/2)\right|dx\\
 & = & \delta N2\cdot4^{N}\int_{\mathbb{U}}\left|\sum_{i=1}^{N}\text{Im}(x_{ii})\,\sin(\lambda_{i}/2)\right|dx
\end{eqnarray*}
Because $G$ is fixed, $\sin(\Lambda_{G}/2)$ is fixed. In order to lower-bound this quantity we are free to choose $G_1$ and $G_2$ to induce $O(1)$ eigenvalues  away from  one. We can more generally state that  for (non-trivial) $G\ne I$ that is sufficiently far from identity, the
set of solutions for $\sum_{i=1}^{N}\text{Im}(x_{ii})\sin(\lambda_{i}/2)=0$
is a low dimensional hyperplane. The integral involves a fixed $G\ne I$,
a finite $N\in\{2,4\}$ , and the Haar measure. Since the integrand
does not vanish with probability one (over the full measure set),
the integral must evaluate to a positive constant. This establishes
that $\text{TVD}(p_{\tilde{X}(\theta)}\,,p_{X(\theta)})=\Theta(\delta)\le \Theta(\Delta)$.
QED

\begin{remark}
The above proof holds because of the nice properties of the Cayley parametrization, and such a lower-bound does not hold in general for other parametrizations. We numerically investigated the above by sampling $10^{7}$ unitaries from the Haar measure and find $\mathbb{E}|\text{Tr}(x+x^{\dagger})|\approx2.16$. Further, for a fixed $G$ drawn randomly from Haar unitary measure we find
\[
\mathbb{E}\left[\left|\text{Tr}\left[(x+x^{\dagger})-(G^{\dagger}x+x^{\dagger}G)\right]\right|\right]\approx3.07
\]
which fluctuates with the choice of $G$. 
\end{remark}

Proving $\Omega(\Delta)$ statistical distance between $C(\theta)$ and $B(\theta)$ immediately follows from the results of this section because of the fact that the statistical distance can only increase if we have more gates. Below we prove a better lower-bound. 
\begin{theorem}\label{thm:no-go} The total variation distance between the probability distribution over circuits $C(\theta)$ and $B(\theta)$ is $\Theta(m\Delta)$ where $m$ is the number of gates.
\end{theorem}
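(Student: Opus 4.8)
The plan is to prove the matching upper and lower bounds on $\text{TVD}(p_{C(\theta)},p_{B(\theta)})$ separately, with the lower bound being the genuinely new content. For the upper bound I would apply \Cref{thm:statistical-distance} to each circuit, giving $\text{TVD}(p_{C(\theta)},p_{C(1)})=O(m\Delta)$ and $\text{TVD}(p_{B(\theta)},p_{B(1)})=O(m\Delta)$, and then observe that $p_{C(1)}$ and $p_{B(1)}$ coincide: at $\theta=1$ every gate of either circuit is an independent Haar unitary by translation invariance of the Haar measure, so the two product distributions are literally equal. The triangle inequality then yields $\text{TVD}(p_{C(\theta)},p_{B(\theta)})\le \text{TVD}(p_{C(\theta)},p_{C(1)})+\text{TVD}(p_{B(1)},p_{B(\theta)})=O(m\Delta)$.

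For the lower bound I would build directly on the single-gate computation of Eqs.~\eqref{eq:pX}--\eqref{eq:pCX}. Writing $\theta=1-\delta$ with $|\delta|\le\Delta$, I model each circuit as a product distribution over its $m$ independent gates, so $p_{C(\theta)}=\prod_{k=1}^m p^C_k$ and $p_{B(\theta)}=\prod_{k=1}^m p^B_k$, where $p^C_k$ is the law of the $k$-th deformed gate of $C$ and $p^B_k$ that of $B$. Expanding each factor to first order in $\delta$ about the Haar density (the zeroth-order factors being uniform and hence cancelling in the difference), each gate contributes a mean-zero perturbation $\delta\,t_k(x_k)$ with $t_k(x_k)\propto \text{Tr}[(I-G_k^\dagger)x_k+(I-G_k)x_k^\dagger]$ and $G_k=C_k^\dagger B_k$, exactly as derived above for one gate. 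Collecting terms, the signed density difference is $\delta\sum_{k=1}^m t_k(x_k)$ against the Haar product measure, so that
\begin{equation*}
\text{TVD}(p_{C(\theta)},p_{B(\theta)})=\frac{\delta}{2}\int\Bigl|\sum_{k=1}^m t_k(x_k)\Bigr|\,d\mu(x_1)\cdots d\mu(x_m)+O(\delta^2),
\end{equation*}
reducing the claim to showing that this accumulated $L^1$ mass grows linearly in $m$.

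The hard part will be exactly this last step: establishing $\int\bigl|\sum_k t_k(x_k)\bigr|\,d\mu^{\otimes m}=\Theta(m)$. The single-gate result guarantees $\int|t_k|\,d\mu=\Theta(1)$ for each $G_k\ne I$ (i.e.\ $C_k\ne B_k$, with $N\in\{2,4\}$), which already gives the trivial $\Omega(\Delta)$ by restricting attention to one gate. The subtlety is that the $t_k(x_k)$ are functions of \emph{independent} Haar variables and are individually mean-zero, so one cannot pull the sum outside the absolute value; a direct second-moment (central-limit) estimate would only give $\Theta(\sqrt m)$. Upgrading this to the claimed $\Theta(m)$ requires ruling out the cancellation that independence would otherwise produce --- for instance by exhibiting an explicit event or test functional on the gate tuple whose bias under the two product laws is $\Omega(m\delta)$, or by arranging the gate differences $G_k$ so that the per-gate contributions add coherently rather than as independent fluctuations. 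I expect this non-cancellation argument, rather than the first-order expansion or the upper bound, to be the crux of the proof.
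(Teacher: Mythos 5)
Your upper bound is exactly the paper's: \Cref{thm:statistical-distance} applied to each circuit plus the triangle inequality through the common Haar endpoint $p_{C(1)}=p_{B(1)}$. Your reduction of the lower bound to the single-gate, first-order-in-$\delta$ computation also matches the paper, which reuses the calculation of \Cref{sec:lower-bound-information-extracted} with $G$ taken to be the relevant product of corresponding gates of $C$ and $B$. Where you stop, however, is precisely where the theorem is decided: you reduce the claim to $\int\left|\sum_{k}t_k(x_k)\right|d\mu^{\otimes m}=\Theta(m)$ and explicitly leave it open, so as written your proposal does not prove the stated $\Omega(m\Delta)$ bound. The paper closes this step by a different route: it observes that both $p_{C(\theta)}$ and $p_{B(\theta)}$ are product measures over the $m$ independent gates and asserts that the overlap of the products is at most the product of the single-gate overlaps, $1-\text{TVD}\le(1-\Theta(\Delta))^m$, which for $\Delta=O(1/m)$ yields $\text{TVD}\ge\Omega(m\Delta)$.

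You should know that the cancellation issue you flag as the crux is substantive and is not actually resolved by the paper's step. For nonnegative reals $\min(ab,cd)\ge\min(a,c)\min(b,d)$, hence $\int\min\left(\prod_k p_k,\prod_k q_k\right)\ge\prod_k\int\min(p_k,q_k)$: the overlap of a product measure is at \emph{least} the product of the overlaps, so the inequality the paper invokes points in the wrong direction (the correct consequence of the product structure is the upper bound $\text{TVD}\le 1-\prod_k(1-\text{TVD}_k)$, consistent with subadditivity of TVD). Your own first-order expansion makes the difficulty concrete: the $t_k(x_k)$ are independent and mean-zero under Haar, so $\mathbb{E}\left|\sum_k t_k\right|=\Theta(\sqrt{m})$ by a central-limit or Hellinger-distance estimate, pointing to $\Theta(\sqrt{m}\,\Delta)$ rather than $\Theta(m\Delta)$ for the total variation distance. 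In short, the non-cancellation statement you correctly identify as the missing ingredient is not supplied by your proposal, and the paper's own argument for it is not valid as stated; only the single-gate bound $\Omega(\Delta)$ (and hence the corollaries that need merely a non-negligible distinguishing advantage) survives without further work.
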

\begin{proof}
The upper-bound follows from previous work~\cite{movassagh2023hardness}. The lower-bound on a single gate was derived above in Section~\ref{sec:no-go}, where $G_1$ and $G_2$ are seen as corresponding gates of $C(\theta)$ and $B(\theta)$. By the independence of the gates, the measure over each circuit is product, i.e.,  $p_{X_1(\theta),X_2(\theta),\dots,X_m(\theta)}=\prod_{i=1}^m p_{X_i(\theta)}$ and similarly $p_{\tilde{X}_1(\theta),\tilde{X}_2(\theta),\dots,\tilde{X}_m(\theta)}=\prod_{i=1}^m p_{\tilde{X}_i(\theta)}$. Total variation distance is the maximum possible difference between two distributions hence the minimum overlap is
\[
1-\text{TVD}(p_{X_1(\theta),\dots,X_m(\theta)},p_{\tilde{X}_1(\theta),\dots,\tilde{X}_m(\theta)}) \le  (1-\Theta(\Delta))^m \qquad \Delta=O(m^{-1}). 
\]
We conclude that $\text{TVD}(p_{X_1(\theta),\dots,X_m(\theta)},p_{\tilde{X}_1(\theta),\dots,\tilde{X}_m(\theta)})\ge\Omega(m\Delta).$
\end{proof}

This theorem can be directly used to construct an attack to our scheme if we pick $\Delta = \frac{1}{\poly(\lambda)}$.

\begin{corollary}
 The triple   $(\Gen, \Eval, \Rec)$, defined in Algorithms~\ref{algo:gen},~\ref{algo:eval},~\ref{algo:rec}, is {\em not}  a
     $\left(p, 1,  \ell, |C| \ell, \eta\right)$ QFSS scheme if $\Delta = \frac{1}{\poly(\lambda)}$.
\end{corollary}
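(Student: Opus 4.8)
The plan is to show that the security requirement \eqref{eq:security} fails while correctness remains intact, so the triple cannot be a valid QFSS scheme once $\Delta = 1/\poly(\lambda)$. The key observation is that for single-party security the only relevant quantity is the total variation distance between the distribution of the share handed to the corrupted party under the two challenge circuits, and Theorem~\ref{thm:no-go} already lower-bounds exactly this quantity.

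Concretely, I would fix a corrupted set $T = \{i\}$ (say $i = 1$) and construct an adversary $\mathcal{A}$ for $\mathsf{IndQFSS}_{\{i\}}$ as follows. First, $\mathcal{A}$ outputs two circuits $C^0, C^1 \in \mathcal{C}$ with the same architecture that differ in at least one gate, choosing the corresponding gate pair $G_1 = C^0_k$, $G_2 = C^1_k$ so that $G = G_1^\dagger G_2 \neq I$ is bounded away from the identity, i.e.\ with the $O(1)$ eigenvalue separation that makes the lower bound of \Cref{sec:lower-bound-information-extracted} nonvanishing. The challenger then returns the single share $C^b_i = C^b(1 - i\Delta)$, which is precisely a sample drawn from the distribution $p_{C^b(\theta_i)}$ at $\theta_i = 1 - i\Delta$, where $|1 - \theta_i| = i\Delta$.

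The next step is to identify the two hypotheses $b = 0$ and $b = 1$ with the circuit distributions $p_{C^0(\theta_i)}$ and $p_{C^1(\theta_i)}$ analyzed in \Cref{thm:no-go} (taking $C = C^0$ and $B = C^1$, and reading $\delta = i\Delta$ in place of $\Delta$). By that theorem the distance is $\Theta(m\, i\Delta)$ when the circuits differ in every gate, and by the monotonicity remark preceding it---adding gates can only increase the statistical distance---at least $\Omega(i\Delta)$ when they differ in a single gate. In either case, since $\Delta = 1/\poly(\lambda)$ and $i \geq 1$, this TVD is $\Omega(1/\poly(\lambda))$, hence non-negligible. The optimal statistical test for the two share distributions then lets $\mathcal{A}$ output a guess $b'$ with $\Pr[b' = b] = \tfrac12 + \tfrac12\,\mathrm{TVD}(p_{C^0(\theta_i)}, p_{C^1(\theta_i)})$, so that $\bigl|\Pr(\mathsf{IndQFSS}_{\{i\}}(\mathcal{A}) = 0) - \tfrac12\bigr| = \Omega(1/\poly(\lambda))$, directly violating \eqref{eq:security}.

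I expect the main point requiring care to be the first identification: verifying that the share distribution seen by the single corrupted party really is the product distribution $\prod_k p_{C^b_k f(\theta_i h_k)}$ covered by \Cref{thm:no-go}, and confirming that the hypothesis $|1 - \theta_i| \ll 1$ still holds for $\theta_i = 1 - i\Delta$ (which needs $p\Delta \ll 1$, true for large $\lambda$ since $m$ is a fixed polynomial while $\Delta$ is an inverse polynomial whose degree we may take large). A secondary point to stress is that this is an information-theoretic attack: the distinguisher is the unbounded optimal statistical test, so the scheme fails even statistical security and therefore fails to meet the guarantee claimed in \Cref{thm:one-secret-sharing}. Finally, I would remark that correctness is untouched---indeed a larger $\Delta$ only relaxes the precision demand of \Cref{thm:general-bw}---so the failure is entirely on the security side, which is exactly the tension this section's no-go result formalizes.
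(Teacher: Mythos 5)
Your proposal is correct and follows essentially the same route as the paper: the paper gives no separate proof for this corollary beyond the remark that Theorem~\ref{thm:no-go} ``can be directly used to construct an attack,'' and your writeup is precisely that attack made explicit --- corrupt a single party, invoke the $\Omega(\Delta)$ lower bound on the total variation distance between the share distributions of two challenge circuits, and apply the optimal statistical test to obtain a non-negligible advantage $\tfrac{1}{2}\mathrm{TVD} = \Omega(1/\poly(\lambda))$ in $\mathsf{IndQFSS}_{\{1\}}$. Your care about the single-differing-gate case and about choosing $i=1$ so that $|1-\theta_i|=\Delta\ll 1$ keeps the argument within the regime where the lower bound of \Cref{sec:lower-bound-information-extracted} applies.
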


\subsection{Exponential precision}
\label{sec:negl-Delta}
This section shows that in our scheme, we need superpolynomial $\ell$ in order to achieve negligible security.
In order to show that, we will prove that if $\ell = \poly(\lambda)$ and $\Delta = \negl(\lambda)$, then there exists a circuit $C^*$ for which $\Rec$ will fail with constant probability.

\begin{lemma}
For any $\ell = \poly(\lambda)$ and $\Delta = \negl(\lambda)$, there exists a circuit $C$ such that
\begin{align}\label{eq:reconstruction-fail}
    \Pr\left[
    \left|\Rec(y_1,...,y_p) - \norm{\Pi C_i \ket{\psi}}^2 \right| =1  \quad \middle| 
    \quad 
    \substack{
(C_1,\dots, C_p) \leftarrow \Gen(1^\lambda, C^*) \\
y_i = \Eval_\ell(i, C^*_i \ket{\psi}, \Pi) 
}
    \right] = 1- \negl(\lambda).
\end{align}
\end{lemma}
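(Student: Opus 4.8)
The plan is to prove failure by a two-secret indistinguishability argument, exploiting the fact that the very ill-conditioning which forces exponential precision in \Cref{thm:general-bw} also makes polynomial precision reveal essentially nothing about the value at $\theta=0$. Concretely, I would fix $\Pi$ and $\ket\psi$ and construct a pair of circuits $C^0,C^1$ of the given architecture whose target probabilities are extreme and opposite, $\norm{\Pi C^0\ket\psi}^2=0$ and $\norm{\Pi C^1\ket\psi}^2=1$, and then take $C^\ast$ to be whichever of the two the reconstruction gets wrong.

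First I would analyze the transcript seen by $\Rec$. For a secret $C^b$, party $i$ reports $y_i$, an average of $\ell$ Bernoulli samples with mean $p^b_i:=\norm{\Pi C^b(\theta_i)\ket\psi}^2$, where $\theta_i=1-i\Delta$; all nodes lie in the window $|1-\theta_i|\le p\Delta=\Theta(m)\cdot\negl(\lambda)=\negl(\lambda)$. The key step is to show the two transcript distributions are statistically $\negl(\lambda)$-close. Since an $\ell$-sample average distinguishes means differing by $\gg \ell^{-1/2}$ and no better, and $\ell=\poly(\lambda)$, it suffices that $|p^0_i-p^1_i|=\negl(\lambda)$ at every node, which I would establish by a construction/dimension argument: the probability functions $p^0(\theta),p^1(\theta)$ are degree-$\Theta(m)$ algebraic functions that may be forced to agree up to $\negl(\lambda)$ on the $\negl(\lambda)$-width window around $\theta=1$ while still differing by $1$ at $\theta=0$. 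This is exactly the converse of the amplification factor $\exp[d(1+\log\Delta^{-1})]$ in \Cref{thm:general-bw}, which is superpolynomial when $\Delta=\negl(\lambda)$.

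Given $\negl(\lambda)$-closeness of the transcripts, I would couple them so that $\Rec$ receives identical inputs, and hence outputs the identical value $v=F(0)$, with probability $1-\negl(\lambda)$. Correctness for $C^0$ would require $v\le\eta$ while correctness for $C^1$ would require $v\ge 1-\eta$ on the same coupled transcript, which is impossible for $\eta<1/2$. Thus on at least one of the two secrets $\Rec$ lands at the wrong end of $[0,1]$, incurring error arbitrarily close to $1$; taking $C^\ast$ to be that secret yields the claim. The mechanism is transparent from the reconstruction formula $F(0)=\sum_i y_i L_i(0)$: with nodes spaced by $\Delta$ inside a $\negl(\lambda)$-window, every Lagrange weight $|L_i(0)|$ is superpolynomially large, so the $\Theta(\ell^{-1/2})$ sampling noise in each $y_i$ is amplified into a superpolynomial spread of $F(0)$, which therefore almost never lands in any fixed $O(1)$ interval around the true value.

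I expect the main obstacle to be establishing the transcript closeness rigorously, for two reasons. First, the node probabilities $p^b_i$ are not independent across $i$: they are generated by the same Haar gates of a single Cayley path, so the transcript is a correlated object and closeness must be argued at the level of the joint distribution over the shared randomization rather than node-by-node; one must also check this does not contradict the two-party leakage of \Cref{cor:impossibility-2qfss}, the point being that poly-precision single-probability reports carry far less information than the exact unitary shares exploited there. Second, one must actually exhibit admissible $C^0,C^1$, valid unitaries of the fixed architecture, realizing the window-agreement/endpoint-disagreement profile, and then convert the indistinguishability into the stated $1-\negl(\lambda)$ failure probability with error $1$ for a single identified circuit; this last quantitative step is where the anti-concentration of the sampling noise against the blown-up Lagrange weights $L_i(0)$ must be controlled, and is the most delicate part of the argument.
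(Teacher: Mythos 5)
Your proposal takes a much more elaborate route than the paper and, as written, has two genuine gaps. The paper's proof is a direct, one-circuit argument: take $C$ to be a layer of identities with $\ket\psi=\ket{0^n}$ and $\Pi=\ketbra{0^n}{0^n}$, so the true value is $1$; each share $C_i=C(\theta_i)$ is $O(m\Delta)=\negl(\lambda)$-close in TVD to a circuit of Haar-random one-qubit gates, for which $\norm{\Pi D\ket{0^n}}^2$ has mean and standard deviation $2^{-n}$; hence every $C_i$ has exponentially small output probability with overwhelming probability, every party's $\poly(\lambda)$ samples all return $0$, and $\Rec(0,\dots,0)=0$, giving error exactly $1$. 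No two-circuit indistinguishability, coupling, or anti-concentration is needed.

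The first gap in your approach is the transcript-closeness step, which you correctly flag as the main obstacle but do not resolve, and which is more problematic than you suggest: the paper's own \Cref{lem:tvd} shows that the \emph{joint} distribution of two or more shares under $C^0$ versus $C^1$ has total variation distance $1$ (the shares are deterministic functions of one another along the Cayley path). Per-node closeness $|p^0_i-p^1_i|=\negl(\lambda)$ therefore cannot be lifted to closeness of the joint law of $(y_1,\dots,y_p)$ by any generic argument; you would have to prove that the map from shares to output probabilities destroys exactly the correlations that \Cref{lem:tvd} exploits, and nothing in your sketch does this. The second gap is quantitative: even granting the coupling, the event ``which of $C^0,C^1$ fails'' is random, so from $\Pr[A\cup B]\ge 1-\negl(\lambda)$ you only get $\max(\Pr[A],\Pr[B])\ge \tfrac12-\negl(\lambda)$, i.e.\ constant failure probability for a fixed circuit, not the claimed $1-\negl(\lambda)$. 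Your fallback, anti-concentration of $F(0)=\sum_i y_i L_i(0)$ under sampling noise, is not universally valid either: in the regime that actually occurs here the $y_i$ are \emph{exactly} $0$ with overwhelming probability, so there is no noise to amplify, and $F(0)$ concentrates perfectly (on the wrong value). That last observation is, in effect, the paper's entire proof.
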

\begin{proof}
Let us fix $\ket{\psi} = \ket{0^n}$, $\Pi = \ketbra{0^n}{0^n}$, and $C$ to be a circuit composed of one layer of one-qubit identities. Notice that 
\begin{align}\label{eq:example-expected}
    \norm{\Pi C \ket{\psi}}^2  = 1.
\end{align}

Let $(C_1,\dots, C_p) \leftarrow \Gen(1^\lambda, C)$ and $\mathcal{C}_i$ be the distribution of the circuit $C_i$. Let $\mathcal{H}$ be the distribution where we have one layer of one-qubit Haar random gates.
We have from \Cref{thm:no-go} that $\text{TVD}(\mathcal{C}_i,\mathcal{H}) = O(m\Delta(\lambda))$. 
Since $\mathbb{E}_{D \sim \mathcal{H}}[\norm{\Pi D \ket{\psi}}^2] = \frac{1}{2^n}$ and $\text{Var}_{D \sim \mathcal{H}}[\norm{\Pi D \ket{\psi}}^2] = \frac{1}{2^{2n}}$, we have that for every polynomial $p$,
    \begin{align*}
        \Pr_{C_i \sim \mathcal{C}_i}\left[\norm{\Pi C_i \ket{\psi}}^2  \geq \frac{1}{p(\lambda)}\right] = \negl(\lambda),
    \end{align*}

In this case, except with negligible probability, for all $i \in [p]$, $\Eval(i,C_i,\ket{\psi}^{\otimes \ell},\Pi) = 0$, since $\ell = \poly(\lambda)$. It follows that with probability $1 - \negl(\lambda)$
\begin{align}\label{eq:example-reconstruction}
    \Rec\left( \Eval\left(1,C_1,\ket{\psi}^{\otimes \ell},\Pi\right),\cdots,\Eval\left(p,C_p,\ket{\psi}^{\otimes \ell},\Pi\right) \right) = \Rec(0,\cdots,0) = 0.
\end{align}

\Cref{eq:reconstruction-fail} follows from \Cref{eq:example-expected,eq:example-reconstruction}.
\end{proof}

This shows that we cannot pick $\ell = \poly(\lambda)$ in our scheme since we do not have its correctness.

\begin{corollary}
 The triple   $(\Gen, \Eval, \Rec)$, defined in Algorithms~\ref{algo:gen},~\ref{algo:eval},~\ref{algo:rec}, is {\em not}  a
     $\left(p, 1,  \ell, |C| \ell, \eta\right)$ QFSS scheme for $\ell = \poly(\lambda)$.
\end{corollary}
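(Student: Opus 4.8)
The plan is to read off this corollary directly from the preceding lemma, since all the quantitative work has already been done there. The first thing to notice is that the generation algorithm of our scheme (Algorithm~\ref{algo:gen}) is specified to use a negligible spacing $\Delta = \negl(\lambda)$; consequently, once we also assume $\ell = \poly(\lambda)$, the hypotheses of the previous lemma are satisfied verbatim. I would therefore instantiate that lemma at the circuit $C^*$ it produces — one layer of single-qubit identities, with $\ket{\psi} = \ket{0^n}$ and $\Pi = \ketbra{0^n}{0^n}$ — for which $\norm{\Pi C^* \ket{\psi}}^2 = 1$ while $\Rec$ returns $0$ with probability $1 - \negl(\lambda)$, so that the reconstruction error equals exactly $1$ with overwhelming probability.

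The second step is to confront this with the correctness clause of \Cref{def:qfss}, which requires that for every circuit the reconstruction error be at most $\eta$ except with negligible probability. Being a $(p,1,\ell,|C|\ell,\eta)$ scheme presupposes a meaningful precision $\eta < 1$ — a scheme with $\eta \ge 1$ is vacuous, since any two values in $[0,1]$ already differ by at most $1$ — so the event that the error is exactly $1$, which the lemma guarantees with probability $1 - \negl(\lambda)$, flatly contradicts the demand that this error be at most $\eta$ with probability $1 - \negl(\lambda)$. Correctness therefore fails, and the triple cannot be a valid QFSS scheme when $\ell = \poly(\lambda)$.

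To make the conclusion airtight I would also record the dichotomy that rules out repairing correctness by enlarging $\Delta$. If one abandons a negligible $\Delta$, the total-variation lower bound of \Cref{thm:no-go} forces the distinguishing advantage to become non-negligible, which, as already observed in \Cref{sec:lower-bound-information-extracted}, breaks the security requirement~\eqref{eq:security} instead. Hence for polynomial $\ell$ there is no admissible choice of $\Delta$: a negligible $\Delta$ destroys correctness through the lemma, and a non-negligible $\Delta$ destroys security. I do not expect a real obstacle here; the one point that warrants a careful sentence is exactly this trade-off, together with the implicit normalization $\eta < 1$, because the substantive estimates have all been established upstream.
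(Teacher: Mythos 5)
Your proposal is correct and follows essentially the same route as the paper: the corollary is an immediate consequence of the preceding lemma, which exhibits a circuit (one layer of identities on $\ket{0^n}$ with $\Pi = \ketbra{0^n}{0^n}$) for which the reconstruction error equals $1$ with probability $1-\negl(\lambda)$, directly violating the correctness clause of \Cref{def:qfss} for any meaningful $\eta < 1$. The additional dichotomy you record about non-negligible $\Delta$ breaking security is accurate but not needed here, since Algorithm~\ref{algo:gen} already fixes $\Delta = \negl(\lambda)$.
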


\section{Impossibility beyond one-party secret sharing}\label{sec:one party}

In this section, we prove that if two malicious parties collude, they can break our QFSS scheme.

\begin{figure}
\center
 \includegraphics[scale=0.5]
{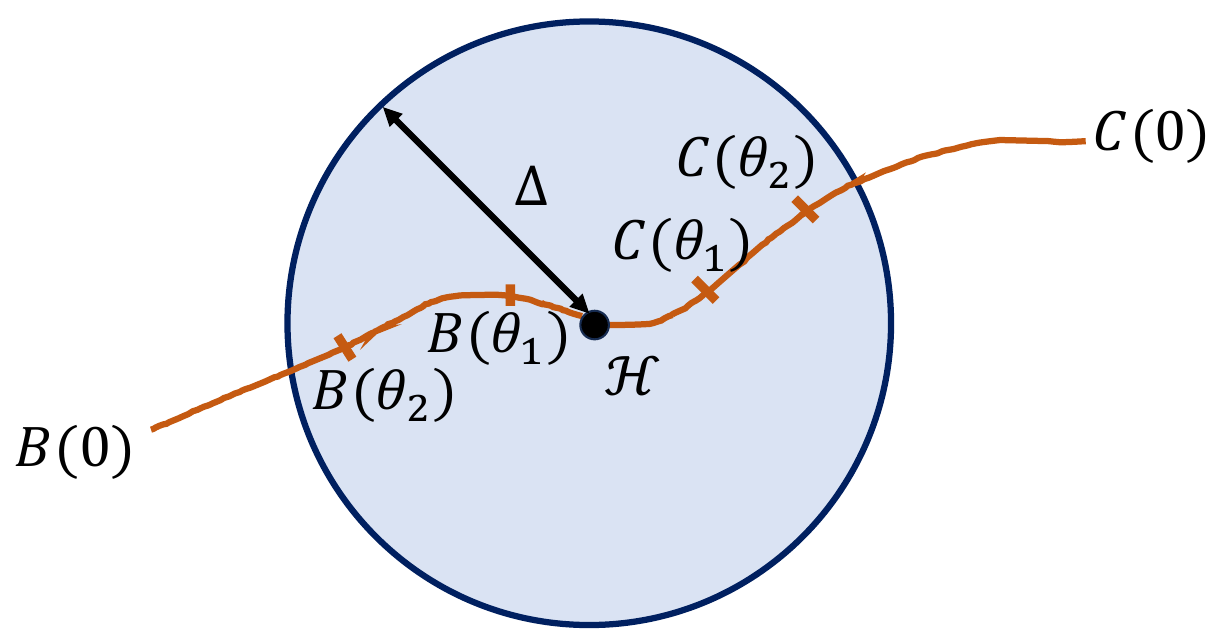}
\caption{The center of the ball denotes the Haar measure $\mathcal{H}$ and the circuits  $B(\theta_i)$ and $C(\theta_j)$ are instances of 
 small pullbacks of the Haar measure towards  the fixed circuits $B:=B(0)$ and $C:=C(0)$.}
\label{fig:security}
\end{figure}

Let $B_{1},\dots,B_{m}$ and $C_{1},\dots,C_{m}$ be the 
fixed set of gates corresponding to the circuits $B$ and $C$ respectively, which have the same architecture. Let $H_{1},\dots,H_{m}$ and $H'_{1},\dots,H'_{m}$ be local unitaries drawn from the Haar measure according to the circuit architecture of $B$ and $C$. 
 Recall that we can always find hermitian matrices $h_{k},h'_{k}$
such that $h_{k}=f^{-1}(H_{k})$ and $h'_{k}=f^{-1}(H'_{k}).$ We define the corresponding Cayley
paths 
\begin{eqnarray*}
B_{k}(\theta)  := & B_{k}f(\theta h_{k}) \quad
 \text{and}\quad
C_{k}(\theta)  := & C_{k}f(\theta h'_{k}).
\end{eqnarray*}
For a given $\theta$, we can instantiate the following circuits using the Cayley of the gates: 
\begin{eqnarray}\label{eq:BC_theta}
B(\theta):=\prod_{k=1}^{m}B_{k}(\theta),\quad
 \text{and}\quad C(\theta):=\prod_{k=1}^{m}C_{k}(\theta)\;.
 \end{eqnarray}

The distribution over a circuit with Haar gates is denote by $\mathcal{H_{A}}$
where $\mathcal{A}$ denotes the architecture that the circuit respects.
The distribution over a circuit $B(\theta)$ with architecture $\mathcal{A}$
is denoted by $\mathcal{H}_{\mathcal{A},\theta}$ which is the pullback
by $\theta$ by the Cayley path of the Haar measure where $|1-\theta|\le\Delta$.

Let $B(\theta_{1})\sim\mathcal{H}_{\mathcal{A},B,\theta}$
and $C(\theta_{1})\sim\mathcal{H}_{\mathcal{A},C,\theta}$ with probability
densities $\rho_{B}(U;\theta_{1})$ and $\rho_{C}(U;\theta_{1})$
where $U$ denotes the unitary dependence of the random
matrix distributions on the vector $U$. From ~\cite{movassagh2023hardness} and  Theorem~\ref{thm:no-go}, we have that the total variation distance between this distributions is
\begin{eqnarray*}
\text{TVD}\left( \rho_{B},\rho_{C}\right) = \Theta(m\Delta),
\end{eqnarray*}
which is used in the security of our protocol against a single malicious party.

Let $J_{B,\theta_{1},\theta_{2}}$ be the joint distribution on circuits
$\left(B(\theta_{1}),B(\theta_{2})\right)$ and similarly define $J_{C,\theta_{1},\theta_{2}}$
to be the joint distribution on circuits $\left(C(\theta_{1}),C(\theta_{2})\right)$.
We now show that the distance between $J_{B,\theta_{1},\theta_{2}}$  and $J_{C,\theta_{1},\theta_{2}}$ is $1$, which shows that 
our protocol is not secure if we have two or more colluding parties. 

To prove that, we first show this auxiliary lemma.

\begin{lemma}\label{lem:maxCorr}
Let $B(\theta)$ be the Cayley path for a circuit $B$. There exists a function $g_{B}$ such that $g_B(B(\theta_1),\theta_1,\theta_2) = B(\theta_2)$. Moreover, for $B \ne C$, $g_C(B(\theta_1),\theta_1,\theta_2) \ne B(\theta_2)$.
\end{lemma}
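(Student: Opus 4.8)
The plan is to build $g_B$ explicitly by inverting the Cayley transform gate by gate, and then to show that running the \emph{wrong} reconstruction map $g_C$ on $B(\theta_1)$ yields a genuinely different circuit. Throughout I treat a share as the list of its local gates (this is precisely what the dealer distributes), so that $g_B,g_C$ act gatewise and equality of circuits means equality of the corresponding gate descriptions. I will use that $f(x)=(1+ix)/(1-ix)$ is a bijection from Hermitian matrices onto unitaries without eigenvalue $-1$, with inverse $f^{-1}(U)=i(I-U)(I+U)^{-1}$, and that $f(0)=I$.

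\textbf{Construction of $g_B$ and the first claim.} Given the $k$-th gate $B_k(\theta_1)=B_k f(\theta_1 h_k)$ and the (known) fixed gate $B_k$, I would first strip off $B_k$ to get $B_k^\dagger B_k(\theta_1)=f(\theta_1 h_k)$; since $\theta_1 h_k$ is Hermitian this unitary has no eigenvalue $-1$, so $f^{-1}$ recovers $\theta_1 h_k$ exactly and $h_k=\theta_1^{-1}f^{-1}(B_k^\dagger B_k(\theta_1))$. I then define the $k$-th output gate of $g_B(\cdot,\theta_1,\theta_2)$ to be $B_k f(\theta_2 h_k)$, which is exactly $B_k(\theta_2)$. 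Assembling over all gates gives $g_B(B(\theta_1),\theta_1,\theta_2)=B(\theta_2)$. (On the measure-zero inputs where an $f^{-1}$ is undefined I let $g_B$ take an arbitrary value.)

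\textbf{The second claim.} For $g_C$ I run the same recipe with the gates $C_k$: on the $k$-th gate it forms $\tilde h_k:=\theta_1^{-1}f^{-1}(C_k^\dagger B_k f(\theta_1 h_k))$ and outputs $C_k f(\theta_2\tilde h_k)$. Since $B\neq C$ there is an index $k^*$ with $D:=C_{k^*}^\dagger B_{k^*}\neq I$. Writing $\phi(\theta):=f^{-1}(D f(\theta h_{k^*}))$, the $k^*$-th output gate coincides with $B_{k^*}(\theta_2)$ exactly when $\theta_2\tilde h_{k^*}=\phi(\theta_2)$, i.e. when $\psi(\theta_1)=\psi(\theta_2)$ for $\psi(\theta):=\phi(\theta)/\theta$ (using $\tilde h_{k^*}=\phi(\theta_1)/\theta_1$). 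It therefore suffices to show $\psi$ is non-constant: if $\psi\equiv M$ then $\phi(\theta)=\theta M$, so $\phi(0)=0$; but $\phi(0)=f^{-1}(Df(0))=f^{-1}(D)$, which vanishes iff $D=f(0)=I$, contradicting $D\neq I$. Hence $\psi$ is a non-constant real-analytic matrix function of $\theta$, so $\{(\theta_1,\theta_2):\psi(\theta_1)=\psi(\theta_2)\}$ is a proper analytic subset; for the fixed $\theta_1\neq\theta_2$ of the protocol it is avoided for almost every draw of the Haar randomness $h_{k^*}$ (a witness that the governing analytic function is not identically zero is $h_{k^*}=0$, where $\psi(\theta)=f^{-1}(D)/\theta$ separates $\theta_1\neq\theta_2$). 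Thus with probability one the $k^*$-th gate of $g_C(B(\theta_1),\theta_1,\theta_2)$ differs from that of $B(\theta_2)$, so the circuits differ.

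\textbf{Main obstacle.} The only delicate point is the second claim: ruling out that the Cayley path anchored at the wrong circuit $C$ can still pass through both $B(\theta_1)$ and $B(\theta_2)$ at the correct parameters. The clean route around it is the reduction to $\psi(\theta_1)=\psi(\theta_2)$ together with the algebraic identity $f^{-1}(D)=0\iff D=I$ that certifies $\psi$ is non-constant; the leftover quantifier over $(\theta_1,\theta_2)$ is then handled by the standard ``analytic and nonzero at one point, hence generically nonzero'' argument, which also matches the intended ``probability one'' (information-theoretic) statement. A secondary subtlety worth flagging is that the comparison must be read at the level of gate descriptions---what the colluding parties actually hold---since for some architectures a single differing local gate need not change the overall product unitary.
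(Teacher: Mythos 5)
Your construction of $g_B$ is the same as the paper's: define $g_B(U,\theta_1,\theta_2)=Bf\bigl(\tfrac{\theta_2}{\theta_1}f^{-1}(B^\dagger U)\bigr)$ gatewise and verify it maps $B(\theta_1)$ to $B(\theta_2)$ by cancelling $B_k^\dagger B_k$ and rescaling the recovered Hermitian generator. Where you genuinely diverge is the second claim. The paper disposes of it in one line --- ``since $f$ and $f^{-1}$ are bijections if $B\ne C$, we have $Cf(\tfrac{\theta_2}{\theta_1}f^{-1}(C^\dagger Bf(\theta_1 h_k)))\ne B(\theta_2)$'' --- which does not actually rule out an accidental coincidence for particular draws of $h_k$: bijectivity of $f$ only reduces the question to whether $\theta_2\psi(\theta_1)=\phi(\theta_2)$ can hold, and it can, for a measure-zero set of $h_{k^*}$ (already in the scalar analogue one can solve for an $h$ making $\psi'(1)=0$, which produces nearby pairs $\theta_1\ne\theta_2$ with $\psi(\theta_1)=\psi(\theta_2)$). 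Your reduction to the non-constancy of $\psi(\theta)=f^{-1}(Df(\theta h_{k^*}))/\theta$, certified by $\phi(0)=f^{-1}(D)\ne 0$ for $D\ne I$, followed by the ``analytic and nonzero at $h_{k^*}=0$, hence generically nonzero'' step, is the argument the paper is missing; the price is that your conclusion holds only with probability one over the Haar randomness rather than pointwise. That weaker form is in fact the correct statement of the lemma and is exactly what Lemma~\ref{lem:tvd} needs, since the delta-function supports there only have to be disjoint up to measure zero for the total variation distance to equal $1$. Your closing caveat about reading equality at the level of gate descriptions (what the parties actually receive) rather than of the product unitary is also a point the paper leaves implicit and is worth keeping.
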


\begin{proof}
Let $g_B(U,\theta_1,\theta_2) = Bf(\frac{\theta_{2}}{\theta_{1}}f^{-1}(B^{\dagger}U))$. We have that
\begin{align*}
    g_B(B(\theta_1),\theta_1,\theta_2) &= Bf(\textstyle\frac{\theta_{2}}{\theta_{1}}f^{-1}(B^{\dagger}B(\theta_1)))     = Bf(\textstyle\frac{\theta_{2}}{\theta_{1}}f^{-1}(B^{\dagger}B f(\theta_1h_k))) 
    = Bf(\frac{\theta_{2}}{\theta_{1}}\theta_1h_k)) 
    =B(\theta_2).
\end{align*}

Moreover, since $f$ and $f^{-1}$ are bijections if $B \ne C$, we have that
\begin{align*}
    g_C(B(\theta_1),\theta_1,\theta_2) &= Cf(\textstyle\frac{\theta_{2}}{\theta_{1}}f^{-1}(C^{\dagger}B(\theta_1)))     = Cf(\textstyle\frac{\theta_{2}}{\theta_{1}}f^{-1}(C^{\dagger}B f(\theta_1h_k))) \ne B(\theta_2).
    \end{align*}
\end{proof}

We now prove the main lemma of this section.

\begin{lemma}
\label{lem:tvd}
$\text{TVD}\left(J_{B,\theta_{1},\dots,\theta_{k}},J_{C,\theta_{1},\dots,\theta_{k}}\right)=1$, for $k\ge 2$.
\end{lemma}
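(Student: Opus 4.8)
The plan is to reduce the claim to exhibiting a single event on which the two joint distributions are perfectly separated. Recall that for distributions $P,Q$ on a common measurable space one has $\text{TVD}(P,Q)=\sup_E|P(E)-Q(E)|$, and this supremum equals $1$ precisely when $P$ and $Q$ have essentially disjoint supports. Thus it suffices to produce an event $E$ with $J_{B,\theta_1,\dots,\theta_k}(E)=1$ and $J_{C,\theta_1,\dots,\theta_k}(E)=0$; combined with the universal bound $\text{TVD}\le 1$, this gives the lemma. Throughout I assume $B\ne C$ as circuits, which is the relevant case.

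First I would observe that it is enough to work with two coordinates, say those indexed by $\theta_1$ and $\theta_2$: the total variation distance of the full joint distributions dominates that of any of their marginals, and the $(\theta_1,\theta_2)$-marginals are exactly $J_{B,\theta_1,\theta_2}$ and $J_{C,\theta_1,\theta_2}$. The essential structural feature I would exploit is that, within a single draw, the deformed circuits $B(\theta_1)$ and $B(\theta_2)$ are built from the \emph{same} Haar gates $h_1,\dots,h_m$; consequently the two coordinates of one sample are deterministically tied together by the recovery map $g_B$ of \Cref{lem:maxCorr}. Since $B\ne C$, there is a gate index $k^\ast$ with $B_{k^\ast}\ne C_{k^\ast}$, and I would apply \Cref{lem:maxCorr} at this gate.

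Concretely, writing a generic sample as a pair of circuits $(V_1,V_2)$ described gate by gate and letting $V^{(k^\ast)}$ denote the $k^\ast$-th gate, I would set
\[
E=\Bigl\{(V_1,V_2): g_{B_{k^\ast}}\bigl(V_1^{(k^\ast)},\theta_1,\theta_2\bigr)=V_2^{(k^\ast)}\Bigr\}.
\]
By the first part of \Cref{lem:maxCorr}, every sample from $J_{B,\theta_1,\theta_2}$ satisfies $g_{B_{k^\ast}}(B_{k^\ast}(\theta_1),\theta_1,\theta_2)=B_{k^\ast}(\theta_2)$, so $J_{B,\theta_1,\theta_2}(E)=1$. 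For the complementary direction, a sample from $J_{C,\theta_1,\theta_2}$ has $k^\ast$-th gates $C_{k^\ast}(\theta_1),C_{k^\ast}(\theta_2)$ generated from a common Haar gate, and the second part of \Cref{lem:maxCorr} (with $B$ and $C$ interchanged, at the gate $k^\ast$) gives $g_{B_{k^\ast}}(C_{k^\ast}(\theta_1),\theta_1,\theta_2)\ne C_{k^\ast}(\theta_2)$; hence no such sample lies in $E$ and $J_{C,\theta_1,\theta_2}(E)=0$. Therefore $\text{TVD}(J_{B,\theta_1,\theta_2},J_{C,\theta_1,\theta_2})=1$, and by the marginal bound the same holds for all $k\ge 2$.

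The step that needs the most care—more a point of rigor than a true obstacle—is confirming that the separating identity of \Cref{lem:maxCorr} holds for (Haar-)every instantiation of the gates, so that $E$ genuinely has $J_C$-measure zero rather than merely small measure; this rests on the bijectivity of $f$ and $f^{-1}$, which makes the Cayley path injective in $\theta$ and lets one invert $B_{k^\ast}(\theta_1)$ to recover $h_{k^\ast}$ exactly. Even if one only knows the inequality outside a measure-zero set, the conclusion $J_{C,\theta_1,\theta_2}(E)=0$ is unaffected, so the argument is robust.
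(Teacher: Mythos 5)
Your proof is correct and takes essentially the same route as the paper's: both reduce to $k=2$ by marginalization and invoke \Cref{lem:maxCorr} to show that the two joint distributions are supported on disjoint graphs of the deterministic maps $g_B$ and $g_C$ (the paper expresses this with delta functions in the $L^1$ integral formula for TVD, you with a separating event in the $\sup_E$ formula). Your localization to a single differing gate $k^\ast$ is a minor but sensible sharpening, since the shares are specified gate-by-gate and \Cref{lem:maxCorr} is really a per-gate statement.
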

\begin{proof}
We show $k=2$ which is sufficient for general $k$ as the statistical distance does not decrease with growing $k$. First of all, let us define $J_{B,\theta_{1},\theta_{2}}(U_1,U_2)=\text{Pr}[B(\theta_1)=U_1,B(\theta_2)=U_2 ]$.  The TVD between $J_{B,\theta_{1},\theta_{2}}$  and $J_{C,\theta_{1},\theta_{2}}$ is defined given by 
\begin{eqnarray*}
\text{TVD}\left(J_{B,\theta_{1},\theta_{2}},J_{C,\theta_{1},\theta_{2}}\right) & = & \frac{1}{2}\int\left| J_{B,\theta_{1},\theta_{2}}(U_1,U_2)-J_{C,\theta_{1},\theta_{2}}(U_1,U_2) \right|\;dU_1 dU_2\\
 & = & \frac{1}{2}\int\left|p_{B}(U_1)\delta(U_2-g_B(U_1))-p_{C}(U_1)\delta(U_2-g_C(U_1))\right|\;dU_1 dU_2
\end{eqnarray*}
where by $g_B$ and $g_C$ are the functions from Lemma~\ref{lem:maxCorr}. From \Cref{lem:maxCorr}, these delta functions have  disjoint supports for $C\ne B$, and it follows that 
\begin{eqnarray*}
\text{TVD}\left(J_{B,\theta_{1},\theta_{2}},J_{C,\theta_{1},\theta_{2}}\right) & = & \frac{1}{2}\left(\int p_{B}(U_1)dU_1 + \int p_{C}(U_1)dU_1\right) = 1. 
\end{eqnarray*}
\end{proof}

\begin{corollary}\label{cor:impossibility-2qfss}
 The triple   $(\Gen, \Eval, \Rec)$, defined in Algorithms~\ref{algo:gen},~\ref{algo:eval},~\ref{algo:rec}, is {\em not}  a
     $\left(p, 2,  \ell, |C| \ell, \eta\right)$ QFSS scheme.
\end{corollary}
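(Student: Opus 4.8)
The plan is to turn Lemma~\ref{lem:tvd} into a concrete break of the $\mathsf{IndQFSS}$ game. Since a $(p,2,\ell,|C|\ell,\eta)$ scheme must remain secure against a coalition of two parties, it suffices to exhibit a single set $T=\{i,j\}$ with $i\ne j$ and an adversary $\mathcal{A}$ whose advantage in $\mathsf{IndQFSS}_T$ is bounded away from $0$. I would have $\mathcal{A}$ submit any two distinct circuits $C^0\ne C^1$ of the fixed architecture as its challenge pair; the challenger then returns the two shares $(C^b(\theta_i),C^b(\theta_j))$, where $\theta_i=1-i\Delta(\lambda)$ and $\theta_j=1-j\Delta(\lambda)$. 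The key point making the reduction clean is that this pair is, by construction of $\Gen$, exactly a sample from the joint distribution $J_{C^b,\theta_i,\theta_j}$ of \Cref{lem:tvd}.

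The core of the argument is that two shares pin down the entire Cayley path, because all shares of a given secret are generated from the \emph{same} Haar randomness $\{h_k\}$ and differ only in the parameter $\theta$. Concretely, I would invoke the deterministic extrapolation map $g$ from \Cref{lem:maxCorr}, applied gate-by-gate (the architecture, and hence the decomposition of each share into its local gate unitaries, is public). The adversary computes $g_{C^0}\big(C^b(\theta_i),\theta_i,\theta_j\big)$ and tests whether it equals $C^b(\theta_j)$; it outputs $b'=0$ on a match and $b'=1$ otherwise. By \Cref{lem:maxCorr}, the match occurs \emph{exactly} when the secret is $C^0$: if $b=0$ then $g_{C^0}(C^0(\theta_i))=C^0(\theta_j)$ identically, whereas if $b=1$ then $g_{C^0}(C^1(\theta_i))\ne C^1(\theta_j)$ since $C^0\ne C^1$ and $f,f^{-1}$ are bijections. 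Hence $b'=b$ with certainty, so the game outputs $b\oplus b'=0$ with probability $1$ and the advantage is exactly $\tfrac12$, which is non-negligible.

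An equivalent, more abstract route is to skip the explicit distinguisher and read the break directly off \Cref{lem:tvd}: because $\text{TVD}\big(J_{C^0,\theta_i,\theta_j},J_{C^1,\theta_i,\theta_j}\big)=1$, the two joint share distributions have disjoint supports, so even an unbounded adversary decides $b$ with certainty by checking which support the received pair lies in. Either way the advantage is $\tfrac12\ne\negl(\lambda)$, contradicting the security requirement of \Cref{def:qfss} for $|T|=2$; and since the explicit distinguisher runs in polynomial time, the scheme fails to be a $(p,2,\ell,|C|\ell,\eta)$ QFSS even in the computational sense.

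I expect the only point needing care is the lift from the essentially single-gate statement of \Cref{lem:maxCorr} to a full $m$-gate circuit: since the Cayley path acts independently on each gate through a shared $h_k$, the map $g$ is applied to each gate factor separately, and consistency of \emph{all} factors is equivalent to $b=0$. This is routine given the product form $C(\theta)=\prod_k C_k(\theta)$, so the corollary follows immediately from the two lemmas with no further estimates; the substantive work has already been carried out in establishing $\text{TVD}=1$.
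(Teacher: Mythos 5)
Your proposal is correct and follows essentially the same route as the paper: the paper's proof is exactly your ``abstract route,'' invoking \Cref{lem:tvd} to conclude that the two joint share distributions $J_{C^0,\theta_i,\theta_j}$ and $J_{C^1,\theta_i,\theta_j}$ have disjoint supports and hence are perfectly distinguishable. Your additional spelled-out distinguisher via the extrapolation map $g_{C^0}$ of \Cref{lem:maxCorr} (applied gate-by-gate, using that all shares come from the same Cayley path) is a faithful unpacking of the same argument rather than a different approach.
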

\begin{proof}
    The attacker chooses two distinct circuits $C^0$ and $C^1$. Given two share $C^b_i$ and $C^b_j$, one can perfectly distinguish if $b = 0$ or $b = 1$, since the total variation distance of the two distibutions is $1$ from \Cref{lem:tvd}.
\end{proof}

\appendix

\newpage
\bibliographystyle{unsrt}
\bibliography{refs} \end{document}